\DeclareMathAlphabet{\mathcal}{OMS}{cmsy}{m}{n}
\newtheorem{theorem}{Theorem}
\newtheorem{corollary}{Corollary}
\theoremstyle{definition}
\newtheorem{definition}{Definition}
\newtheorem{problem}{Problem}
\theoremstyle{remark}
\newtheorem{remark}{Remark}
\DeclareMathOperator{\F}{\rotatebox[origin=c]{45}{$\Box$}}
\DeclareMathOperator{\G}{\Box}
\DeclareMathOperator{\X}{\bigcirc}
\DeclareMathOperator{\until}{\mathcal{U}}
\DeclareMathOperator{\llangle}{\langle\langle}
\DeclareMathOperator{\rrangle}{\rangle\rangle}
\DeclarePairedDelimiter{\abs}{\lvert}{\rvert}
\DeclareMathOperator*{\argmin}{argmin}
\newcommand{\ignore}[1]{}
\def \ACE {\mathrm{ACE} }
\def \anc {\mathrm{anc} }
\def \f {\varphi}
\def \r {\rho} 
\def \x {\mathbf{x}} 
\def \u {\mathbf{u}} 
\def \w {\mathbf{w}} 
\def \y {\mathbf{y}} 
\def\X {{\cal X}}
\def\U {{\cal U}}
\def\W {{\cal W}}
\def\Reals {\mathbb{R}}
\acrodef{TCL}{Thermostatically Controlled Load}
\acrodef{AC}{Air Conditioning}
\acrodef{FERC}{Federal Energy Regulatory Commission}
\acrodef{AGC}{Automatic Generation Control}
\acrodef{CAISO}{California Independent System Operator}
\acrodef{PJM}{Pennsylvania-New Jersey-Maryland}
\acrodef{MCP}{Market Clearing Price}
\acrodef{NGR}{Non-Generator Resource}
\acrodef{SoC}{State of Charge}
\acrodef{RES}{Renewable Energy Source}
\acrodef{DR}{Demand Response}
\acrodef{DLC}{Direct Load Control}
\acrodef{EV}{Electric Vehicle}
\acrodef{SDSP}{Supply-Duct Static Pressure}
\acrodef{SISO}{Single-Input Single-Output}
\acrodef{ARX}{Auto Regressive with eXogenous input}
\acrodef{FERC}{Federal Energy Regulatory Commission}
\acrodef{AGC}{automatic generation control}
\acrodef{LFC}{Load-Frequency Control}
\acrodef{ACE}{Area Control Error}
\acrodef{AFRC}{Area Frequency Response Characteristic}
\acrodef{RHC}{receding horizon control}
\acrodef{HVAC}{heating, ventilation and air-conditioning}
\acrodef{BMC}{bounded model checking}
\acrodef{STL}{Signal temporal logic}
\acrodef{LTL}{linear temporal logic}
\acrodef{LP}{linear program}
\acrodef{MTL}{metric temporal logic}
\acrodef{MILP}{mixed integer linear program}
\acrodef{MPC}{Model predictive control}
\acrodef{RHC}{Receding horizon control}
\def \occ {\text{occ}}
\def \comf {\text{comf}}
\def \f {\varphi}
\def\dt{ \text{\small $\Delta$}t}
\begin{document}

\title{Model Predictive Control for Signal Temporal Logic Specifications}

\author{Vasumathi Raman$^1$, Alexandre Donz\'{e}$^2$, Mehdi Maasoumy$^3$,\\ Richard
  M. Murray$^1$, Alberto Sangiovanni-Vincentelli$^2$ and Sanjit A. Seshia$^2$
\thanks{This work was supported in part by TerraSwarm, one of six centers of STARnet, a Semiconductor Research Corporation program sponsored by MARCO and DARPA.}
\thanks{$^1$V. Raman and R. M. Murray are with the California Institute of Technology, Pasadena, CA 91125, USA
        {\tt\small vasu@caltech.edu, murray@cds.caltech.edu}}%
\thanks{$^2$A. Donz\'{e}, A. Sangiovanni-Vincentelli and S. A. Seshia are with the Department of Electrical Engineering and Computer Science, UC Berkeley, Berkeley, CA 94720, USA
        {\tt\small donze@berkeley.edu, alberto@berkeley.edu, sseshia@eecs.berkeley.edu}}%
\thanks{$^3$M. Maasoumy is with C3 Energy Inc. Redwood City, CA 94063, USA
        {\tt\small mehdi.maasoumy@c3energy.com}}

\thanks{Manuscript received January 14, 2016.}}

\maketitle

\begin{abstract}
We present a mathematical programming-based method for model predictive control of cyber-physical systems subject to signal temporal logic (STL) specifications. We describe the use of STL to specify a wide range of properties of these systems, including safety, response and bounded liveness. For synthesis, we encode STL specifications as mixed integer-linear constraints on the system variables in the optimization problem at each step of a receding horizon control framework. We prove correctness of our algorithms, and present experimental results for controller synthesis for building energy and climate control.


\end{abstract}

\begin{IEEEkeywords}
formal synthesis, timed logics, model predictive control, cyberphysical systems
\end{IEEEkeywords}

\IEEEpeerreviewmaketitle

\section{Introduction}
Controlling a cyber-physical system (CPS) involves handling complex interactions between computing
components and their physical environment, and often necessitates hierarchies of
controllers. Typically at the highest level, a \emph{supervisory} controller is responsible for
making high-level decisions, while at the lowest level traditional control laws such as PID control
are used. In general, the design of these different controllers is done mostly in isolation at each
level, and their combination is implemented ad hoc. As the complexity of these systems
grows, reasoning about the correctness of interactions between the various layers of control becomes
increasingly challenging, begging automation.

Formal methods is the subfield of computer science concerned with verification and synthesis, i.e.,
automatic and rigorous design of digital systems. It provides mathematical
 formalisms for specifying behaviors and algorithms for verification and synthesis of
a system against properties specified within these formalisms.  Methods for synthesis
of correct-by-construction discrete supervisory controllers have been developed and successfully
used for cyber-physical systems in domains including robotics \cite{Fainekos06} and aircraft power system design
\cite{Nuzzo13}. However, for physical systems that require constraints not just on the order of
events, but on the temporal distance between them, simulation and testing is still the method of
choice for validating properties and establishing guarantees; the exact exhaustive verification 
or synthesis of such systems is in general undecidable \cite{AlurHLP00}.

\ac{MPC} or receding horizon control (RHC) is based on iterative, finite horizon optimization
over a model of the \emph{plant}, i.e. the system to be controlled.
At any given time $t$, the current plant state is observed, and
an optimal control strategy computed for some finite time horizon in the future, $[t,t+H]$. An
online calculation is performed to explore trajectories originating from the
current state, and an optimal control strategy computed up to time $t+H$. To provide robustness with
respect to modelling errors, only the first step of the computed optimal control strategy is
implemented. The plant state is then sampled again, and new calculations are performed on a horizon
of $H$ starting from the new current state. While the global optimality of such a receding horizon
approach is not ensured, it tends to do well in practice: in addition to reducing computational
complexity, it improves the system robustness with respect to exogenous disturbances and modeling
uncertainties \cite{MurrayHJMPDF02}. Another reason \ac{MPC} is particularly attractive to industry
is its ability to handle constrained dynamical systems \cite{MayneRRS00}.

\ac{STL}~\cite{MalerN04} was originally developed in order to specify and monitor the expected
behavior of physical systems, including temporal constraints between events. \ac{STL} allows the
specification of properties of dense-time, real-valued signals, and the automatic generation of
monitors for testing these properties on individual simulation traces. It has since been applied to
the analysis of several types of continuous and hybrid systems, including dynamical systems and
analog circuits, where the continuous variables represent quantities like currents and voltages in
a circuit. STL has the advantage of naturally admitting \emph{quantitative} semantics which, in
addition to the yes/no answer to the satisfaction question, provide a real number that grades the
quality of the satisfaction or violation. Such semantics have been defined for timed logics,
including \ac{MTL} \cite{FainekosP09} and \ac{STL} \cite{DonzeM10}, to assess the \emph{robustness}
of systems to parameter or timing variations. 

%


In this paper, 
we solve the problem of control synthesis from \ac{STL} specifications, using a
receding horizon approach. We allow the user to specify desired properties of the system using an \ac{STL}
formula, and synthesize control such that the system satisfies that specification, while using a
receding horizon approach to ensure practicality and robustness.  We do so by decomposing the
\ac{STL} specifications into a series of formulas over each time horizon, such that synthesizing a
controller fulfilling the formula at each horizon results in satisfaction of the global
specification.  Recent work on optimal control synthesis of aircraft load management
systems~\cite{MaasoumyHOLMS} represented \ac{STL}-like specifications as time-dependent equality and
inequality constraints, yielding a \ac{MILP}. The \ac{MILP} was then solved in an \ac{MPC} framework,
yielding an optimal control policy. However, the manual transformation of specifications into
equality and inequality constraints is cumbersome and problem-specific. As a key contribution, this
paper presents two {\em automatically-generated} \ac{MILP} encodings for such \ac{STL}
specifications.

Our main contribution is a pair of bounded model checking-style encodings \cite{BiereHJLS06} for \ac{STL} specifications as \ac{MILP}
constraints on a cyber-physical system. We show how these encodings can be used to generate
open-loop control signals that satisfy finite and infinite horizon \ac{STL} properties and, moreover,
to generate signals that maximize quantitative (robust) satisfaction. 
We provide a fragment of STL, denoted SNN-STL, such that, under
reasonable assumptions on the system dynamics, the problem
of synthesizing an open-loop control sequence such that the
system satisfies a provided specification is a \ac{LP}, and therefore polynomial-time solvable.
We also demonstrate how our
\ac{MILP} formulation of the \ac{STL} synthesis problem can be used in an \ac{MPC} framework to
compute feasible and optimal controllers for cyber-physical systems under timed specifications. We
present experimental results comparing both encodings, and two case studies: one on a thermal
model of an \ac{HVAC} system, and another in the context of regulation services in a micro-grid. These case
studies were previously reported in \cite{Cyphy14,cdc14}. We show how the \ac{MPC} schemes in these
examples can be framed in terms of synthesis from an \ac{STL} specification, and present simulation
results to illustrate the effectiveness of our methodology.



\section{Preliminaries}
\subsection{Systems}

We consider a continuous-time system $\Sigma$ of the form
$$
\dot{x}_t = f(x_t,u_t,w_t)
$$
where $x_t \in {\cal X} \subseteq (\mathbb{R}^{n_c} \times \{0,1\}^{n_l})$ are the continuous and
binary/logical states, $u_t \in {U} \subseteq (\mathbb{R}^{m_c} \times \{0,1\}^{m_l})$ are the
(continuous and logical) control inputs, $w_t \in {W} \subseteq (\mathbb{R}^{e_c} \times
\{0,1\}^{e_l})$ are the external environment inputs (also referred to as ``disturbances''). As $x_t$ includes binary components, the above ODE may
contain discontinuities corresponding to switches in these components. In general, such hybrid 
systems are more accurately modeled using differential-algebraic equations, but we do not dwell
on this point since we approximate their behavior with a difference equation, as follows.


Given a sampling time $\dt>0$, we assume that $\Sigma$ admits a discrete-time approximation
$\Sigma_d$ of the form
\begin{equation} x(t_{k+1}) = f_d(x(t_k), u(t_k), w(t_k))\label{eq:sys_dyn}
\end{equation} where for all $k>0$, $t_{k+1}-t_k = \dt$. A \emph{run} of $\Sigma_d$ is a sequence
$$
\xi = (x_0u_0w_0)(x_1u_1w_1)(x_2u_2w_2)...
$$
where $x_k=x(t_k) \in {\cal X}$ is the state of the system at index $k$, and for each $k\in \mathbb{N}$, $u_k= u(t_k) \in {U}$, $w_k=w(t_k) \in {W}$ and $x_{k+1} =f_d(x_k,u_{k},w_{k})$. We assume that given an initial state $x_0 \in X$, a control input sequence ${\bf u}^N = u_0u_1u_2 \hdots u_{N-1} \in U^N$, and a sequence of environment inputs ${\bf w}^N = w_0w_1w_2 \hdots w_{N-1} \in W^N$, the resulting horizon-$N$ run of a system modeled by equation (\ref{eq:sys_dyn}), which we denote by
$$\xi(x_0, {\bf u}^N, {\bf
  w}^N)=(x_0u_0w_0)(x_1u_1w_1)(x_2u_2w_2)...(x_Nu_Nw_N),
$$ is unique.  In addition, we introduce a
generic cost function $J(\xi(x_0,{\bf u},{\bf w}))$ that maps (infinite and finite) runs to
$\mathbb{R}$.


\subsection{Signal Temporal Logic}

We consider STL formulas defined recursively according to the following grammar:
$$
\f ::= \pi^\mu \mid \neg \psi \mid \f_1 \land \f_2 \mid \F_{[a,b]}~\f \mid \f_1~\until_{[a,b]}~\f_2,
$$
where $\pi^\mu$ is an atomic predicate $\X\times\U\times\W \rightarrow \mathbb{B}$ whose truth value is
determined by the sign of a function $\mu: \X\times\U\times\W \rightarrow \Reals$ and $\f_1, \f_2$ are STL
formulas. The fact that a run $\xi(x_0, \u,\w)$ satisfies an STL formula $\f$ is denoted by
$\xi \models \f$. Informally, $\xi \models \F_{[a,b]} \f$ if
$\f$ holds at some time step between $a$ and $b$, and
$\xi \models \f~\until_{[a,b]}~\psi$ if $\f$ holds at every time step before $\psi$
holds, and $\psi$ holds at some time step between $a$ and $b$. Additionally, we define
$\G_{[a,b]}\f = \neg\F_{[a,b]}(\neg \f)$, so that $\xi \models \G_{[a,b]} \f$ if
$\f$ holds at \emph{all} times between $a$ and $b$. Formally, the validity of a formula $\f$ with
respect to the run $\xi$ is defined inductively as follows
\[
\begin{array}{lll}
\xi \models \f & \Leftrightarrow& (\xi,t_0) \models \f\\
(\xi,t_k) \models\pi^\mu &\Leftrightarrow& \mu(x_k,y_k,u_k,w_k) > 0\\
(\xi,t_k) \models \neg \psi &\Leftrightarrow& \neg(\xi,t_k) \models \psi)\\
(\xi,t_k) \models \f \land \psi &\Leftrightarrow& (\xi,t_k) \models \f \land (\xi,t_k) \models \psi\\
(\xi,t_k) \models \F_{[a,b]} \f &\Leftrightarrow& \exists t_{k'}\in [t_k\!+\!a, t_k\!+\!b], (\xi,t_{k'}) \models \f\\
 (\xi,t_k) \models \f\until_{[a,b]}\psi\!\!\!\! &\Leftrightarrow& \exists t_{k'} \in [t_k\!+\!a,t_k\!+\!b] \mbox{ s.t. } (\xi,t_{k'}) \models \psi \\
&&\land \forall t_{k''} \in [t_{k},t_{k'}], (\xi,t_{k''}) \models \f.
\end{array}
\]

An STL formula $\f$ is \emph{bounded-time} if it contains no unbounded operators; the
\emph{bound} of $\f$ is the maximum over the sums of all nested upper bounds on the temporal
operators, and provides a conservative maximum trajectory length required to decide its
satisfiability. For example, for $\G_{[0,10]} \F_{[1,6]} \f$, a trajectory of length $N$ such that 
$t_N \ge 10 + 6 = 16$ is sufficient to determine whether the formula is satisfiable. 

\begin{remark}
  Here we have defined a semantics for STL over discrete-time signals, which is formally equivalent to the
  simpler \ac{LTL}, once time and predicates are abstracted into steps and
  Boolean variables, respectively. There are several advantages of still using STL over LTL, though. First, STL
  allows us to explicitly use real time in our specifications instead of abstract integer indices,
  which improves the readability relative to the original system's behaviors. Second, although in
  the rest of this paper we focus on the control of the discrete-time system $\Sigma_d$, our
  goal is to use the resulting controller for the control of the continuous system $\Sigma$. Hence
  the specifications should be independent from the sampling time $\dt$. Finally, note that the
  relationship between the continuous-time and discrete-time semantics of STL, depending on
  discretization error and sampling time, is beyond the scope of this paper. The interested reader
  can refer to \cite{fainekos07} for further discussion on this topic.
  \end{remark}

\subsection{Quantitative semantics for STL}\label{robust_defn}

Quantitative or robust semantics for STL are defined by providing 
a real-valued function $\r^\f$ of signal $\xi$ and time $t$ such
that $\r^\f(\xi,t) > 0 \Rightarrow (\xi,t) \models \f$. We define 
one such function recursively, as follows:
\[
\begin{array}{lll}
\r^{\pi^\mu}(\xi,t_k)&=& \mu(x_k,y_k,u_k,w_k)\\
\r^{\neg \psi}(\xi,t_k)&=&  -\r^\psi(\xi, t_k) \\
\r^{\f_1 \land \f_2}(\xi,t_k)&=& \min (\r^{\f_1}(\xi,t_k),\r^{\f_2}(\xi,t_k)    )\\
\r^{\f_1 \lor \f_2}(\xi,t_k)&=& \max (\r^{\f_1}(\xi,t_k),\r^{\f_2}(\xi,t_k)    )\\
\r^{\F_{[a,b]} \psi}(\xi,t_k)&=& \max_{t_{k'}\in [t+a, t+b]}\r^{\psi}(\xi,t_{k'})\\
\r^{\f_1\until_{[a,b]}\f_2}(\xi,t_k) \!\!\!\!\!\!&=&  \max_{t_{k'}\in [t+a, t+b]} ( \min (\r^{\f_2}(\xi,t_{k'}),    \\
&&~~~~~~~~~~~~~~\min_{t_{k''} \in [t_k,t_{k'}]} \r^{\f_1}(\xi,t_{k''}))
\end{array}
\]

Note that if $\mu(x_k,y_k,u_k,w_k) = 0$, neither $\r^{\pi^\mu}(\xi,t_k) > 0$ nor $\r^{\pi^\mu}(\xi,t_k) > 0$. 
Therefore, $(\xi,t) \models \f \not \Rightarrow \r^\f(\xi,t) > 0$.
To simplify notation, we denote $\r^{\pi^\mu}$ by $\r^{\mu}$ for the remainder of the paper. The
robustness of satisfaction for an arbitrary STL formula is computed recursively from the above
semantics by propagating the values of the functions associated with
each operand using $\min$ and $\max$ operators corresponding to the various STL operators. For
example, the robust satisfaction of $\pi^{\mu_1}$ where $\mu_1(x) = x-3>0$ at time $0$ is
$\r^{\mu_1}(\xi,0) = x_0-3$. The robust satisfaction of $\mu_1 \wedge \mu_2$ is the minimum of
$\r^{\mu_1}$ and $\r^{\mu_2}$. Temporal operators are treated as conjunctions and disjunctions along
the time axis: since we deal with discrete time, the robustness of satisfaction of
$\f= \G_{[0,2.1]} \mu_1$ is
\[
\begin{array}{lll}
\r^\f(x,0) &=& \min_{t_k \in [0,2.1]} \r^{\mu_1}(x,t_k)\\
&=& \min \{x_0-3, x_1-3, \hdots, x_K-3\},
  \end{array}
\]
where  $0 \leq t_0 < t_1 < \hdots < t_K \leq 2.1 < t_{K+1} $.

The robustness score $\r^\f(\xi,t)$ can be interpreted as \emph{how much} $\xi$ satisfies
$\f$. Its absolute value can be viewed as the signed distance of $\xi$ from the set of
trajectories satisfying or violating $\f$, in the space of projections with respect to the
function $\mu$ that define the predicates of $\f$ \cite{FainekosP09}.


\section{Problem Statement}
\noindent We now formally state the STL control synthesis problem and its model predictive control formulation.
Given an STL formula $\f$, a \emph{cost function} of the form $J(x_0,\u,\w,\f) \in \Reals$, an
initial state $x_0 \in \X$, a horizon $L$ and a reference disturbance signal $\w \in \W^N$, we
formulate two problems: \emph{open-loop} and \emph{closed-loop} synthesis.  The two scenarios are
depicted as block diagrams in Fig.~\ref{fig:open} and Fig.~\ref{fig:closed}. 

\begin{problem}[open-loop]\label{prob:open}
Compute $\u^*=u_0^*u_1^*\hdots u_{N-1}^*$ where
$$
\begin{array}{lll}
\u^*=& \displaystyle \argmin_{\u \in \U^N} J(x_0, \u, \w, \f)\\
&\mbox{s.t. } \xi(x_0, \u, \w) \models \varphi
\end{array}
$$
\end{problem}

Note that we assume that the state of the plant is fully observable, and the environment inputs are known in advance.

\begin{problem}[closed-loop]\label{prob:mpc}
Given a horizon $0<L<N$, for all $0 \leq k \leq N-L$, compute $u_k^*=u_k^{L*}$, the first element of the
sequence $\u_k^{L*}=u_k^{L*}u_{k+1}^{L*}\hdots u_{k+L-1}^{L*}$ satisfying 
$$
\begin{array}{lll}
\u_k^{L*}=& \displaystyle \argmin_{\u^L_K \in \U^L} J(x_k, \u_k^L, \w_k, \f)\\
&\mbox{s.t. } \xi(x_k, \u^L_k, \w_k) \models \varphi
\end{array}
$$
\end{problem}

The closed-loop formulation corresponds to a model predictive control scheme, where the reference disturbance can change at each iteration $k$.

In Sections~\ref{openloop} and \ref{mpc_synth}, we present both an open-loop solution to Problem
\ref{prob:open}, and a solution to Problem~\ref{prob:mpc} for a large class of STL formulas. In the
absence of an objective function $J$ on runs of the system, we maximize the robustness of the
generated runs with respect to $\varphi$.  A key component of our solution is encoding the \ac{STL}
specifications as \ac{MILP} constraints, which can be combined with \ac{MILP} constraints
representing the system dynamics to efficiently solve the resulting state-constrained
optimization problem.


\section{Open-loop Controller Synthesis}\label{openloop}

To solve Problem \ref{prob:open}, we extend the bounded model checking encoding of \cite{BiereCCZ99} from finite, discrete systems to dynamical systems using mixed-integer programming instead of SAT. Our presentation and notation below follow that of \cite{WolffTM14}. For open-loop controller synthesis, we will search for a trajectory of length $N$ that satisfies $\varphi$. To admit \ac{STL} formulas describing infinite runs, we parametrize an infinite sequence of states using a finite sequence with a loop. Imposing this lasso-shaped structure renders our synthesis procedure conservative for infinite-state systems, in the sense that a solution may exist that is not found when imposing such a structure. For finite-state systems, the lasso shape is without loss of generality but we must still find an appropriate trajectory length $N$. 

\begin{figure}[h]
\centering
\begin{tikzpicture}[node distance=.6cm, >=stealth',every join/.style=->,
box/.style={draw, text centered, text width=1.6cm, minimum height=1cm},
nobox/.style={text centered, text width=1.8cm},
transform shape,scale=1.0
]  
\node (in) [nobox] {$\f,\Sigma_d,J$};
\node (synthesis) [box, below=1cm of in,fill=black!5] {Synthesis};
\node (sol) [nobox,right=of synthesis] {$\u*=$ $u^*_0u^*_1\hdots u^*_{N-1}$};
\node (plant) [box, right= of sol, text width=1.5cm,fill=black!5] {Plant $\Sigma$};
\node (trace) [right= of plant] {$\x,\y$};  
\node (x0wd) [nobox, below= 0.6cm of synthesis] {$x_0$, $w_0,\hdots,w_{N-1}$};
\node (x0w) [nobox, below= 0.6cm of plant] {$x_0$, $\w$};
\draw[->] (in) -- node [left] {} (synthesis);
\draw[->] (x0wd) -- (synthesis);  
\draw[->] (x0w) -- (plant);  
{[start chain]
\chainin (synthesis) [join];
\chainin (sol) [join];
\chainin (plant) [join];
\chainin (trace) [join];
}
\end{tikzpicture}
\caption{Open-loop problem formulation: given an STL formula $\phi$, 
the discrete-time plant model $\Sigma_d$, the cost function $J$, the goal is to
generate a sequence of control inputs $\u^*$ over a horizon of $N$ time steps.
The problem is additionally parametrized by the the initial state $x_0$ and disturbance vector $\w$.
}  
\label{fig:open}
\end{figure}
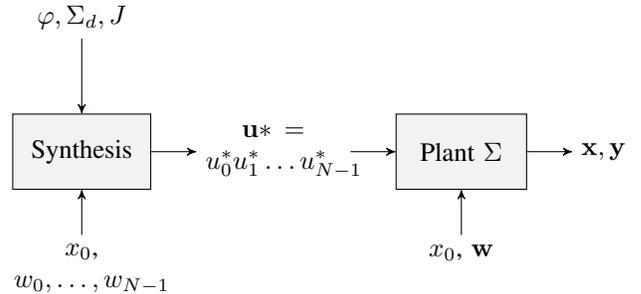

The encoding of Problem \ref{prob:open} as an \ac{MILP} consists of
system constraints, loop constraints and STL constraints, as defined below. 

\subsection{Constraints on system evolution}\label{encode:sys_dyn}

The first component of the set of constraints is provided by the system
model. Our approach applies to any system that yields to a MILP formulation for model predictive control over horizon $N$.
The system constraints encode valid finite (horizon-$N$) trajectories for a system with form
(\ref{eq:sys_dyn}) -- these constraints hold if and only if the trajectory ${\bf x}(x_0, {\bf u}_N)$
satisfies (\ref{eq:sys_dyn}) for $t = 0,1,...,N$. Note that this is quite general, and accommodates
any system for which the resulting constraints and objectives form a mixed integer-linear
program. 
An example is the smart grid regulation control system presented in
\cite{MaasoumySSP14}. Other useful examples include mixed logical dynamical systems such as those
presented 
in \cite{BemporadM99}. Other cost functions and system dynamics can also be included by using appropriate solvers.

\subsection{Loop constraints for trajectory parametrization}\label{encode:lasso}
As described above, to synthesize open-loop control for unbounded (infinite-horizon) specifications, we parametrize the trajectory as a lasso, i.e. constrain it to contain a loop. This loop encoding is again inspired by the basic idea of bounded model checking \cite{BiereHJLS06}, which is to consider only a finite prefix of a path when looking for a solution to an existential model checking problem. A crucial observation is that although the considered path prefix is finite, it can still represent an infinite path if there is a loop back from the last state to any of the previous states. 

To enforce the existence of a loop in the finite system trajectory, we introduce $N$ binary variables $l_1, ..., l_N$, which determine where the loop forms. These are constrained such that only one can be high at a time, and if $l_j = 1$, then $x_{j-1} = x_N$. The following constraints enforce these requirements:

\begin{itemize}
\item $\sum_{j=1}^N l_j = 1$
\item $x_N \le x_{j-1} + M_j(1-l_j), ~j = 1,...,N$,
\item $x_N \ge x_{j-1} + M_j(1-l_j), ~j = 1,...,N$,
\end{itemize} 

\noindent where $M_j$ are sufficiently large positive numbers, picked based on $\cal X$. 

\subsection{Boolean encoding of STL constraints}\label{sec:milp}

Given a formula $\varphi$, we introduce a variable $z^\varphi_t$, whose value is tied to a set of
mixed integer linear constraints required for the satisfaction of $\varphi$ at position $t$ in the
state sequence of horizon $N$. In other words, $z^\varphi_t$ has an associated set of MILP constraints such that
$z^\varphi_t = 1$ if and only if $\varphi$ holds at position $t$. We recursively generate the MILP
constraints corresponding to $z^\varphi_0$ -- the value of this variable determines whether a
formula $\varphi$ holds in the initial state.

\subsubsection{Predicates} \label{encode:preds} The predicates are represented by constraints on
system state variables. For each predicate $\mu \in P$, we introduce binary variables $z^\mu_t \in
\{0,1\}$ for times $t = 0,1,...,N$. The following constraints enforce that $z^\mu_t = 1$ if and only
if $\mu(x_t) > 0$:
\[
\begin{array}{lll}
\mu(x_t) &\le&  M_tz^\mu_t - \epsilon_t\\
-\mu(x_t) &\le&  M_t(1-z^\mu_t) - \epsilon_t
\end{array}
\]
where $M_t$ are sufficiently large positive numbers, and $\epsilon_t$ are sufficiently small positive numbers that serve to bound $\mu(x_t)$ away from $0$. 
This encoding restricts the set of STL formulas that can be encoded using our approach to those over linear predicates, but admits arbitrary STL formulas over such predicates.

\subsubsection{Boolean operations on MILP variables}\label{encode:bool}
As described in Section \ref{encode:preds}, each predicate $\mu$ has an associated binary variable
$z^\mu_t$ which equals 1 if $\mu$ holds at time $t$, and 0 otherwise. In fact, by the recursive
definition of our MILP constraints on STL formulas, each operand $\varphi$ in a Boolean operation
has a corresponding variable $z^\varphi_t$ which is 1 if $\varphi$ holds at $t$ and 0 otherwise. Here
we define Boolean operations on these variables: these are the building blocks of our recursive
encoding. The definitions in this subsection are consistent with those in \cite{WolffTM14}.

Logical operations on variables $z^\psi_t \in [0,1]$ are defined as follows:

\noindent \underline{Negation: $z^\psi_t = \neg z^\varphi_t$}~~~~~~~~~~~~~
$
z^\psi_t = 1 - z^\varphi_t
$\\

\noindent \underline{Conjunction: $z^\psi_t = \displaystyle\land_{i=1}^m z^{\varphi_i}_{t_i}$}
$
\begin{array}{l}
z^\psi_t \le z^{\varphi_i}_{t_i}, i = 1,...,m,\\
z^\psi_t \ge 1 - m + \sum_{i=1}^mz^{\varphi_i}_{t_i}
\end{array}
$\\\\

\noindent \underline{Disjunction: $z^\psi_t = \displaystyle\lor_{i=1}^m z^{\varphi_i}_{t_i}$}~~~
$
\begin{array}{l}
z^\psi_t \ge z^{\varphi_i}_{t_i}, i = 1,...,m,\\
z^\psi_t \le \sum_{i=1}^mz^{\varphi_i}_{t_i}
\end{array}
$\\

Given a formula $\psi$ containing a Boolean operation, we add new continuous variables $z^\psi_t \in [0,1]$, 
and set $z^\psi_t =  \neg z^\mu_t$, $z^\psi_t = \displaystyle\land_{i=1}^m z^{\varphi_i}_{t_i}$, and $z^\psi_t = \displaystyle\lor_{i=1}^m z^{\varphi_i}_{t_i}$ for $\psi = \neg \mu$, $\psi = \displaystyle\land_{i=1}^m\varphi_i$ and $\psi = \displaystyle\lor_{i=1}^m\varphi_i$, respectively. These constraints enforce that $z^\psi_t = 1$ if $\psi$ holds at time $t$ and $z^\psi_t = 0$ otherwise.

\subsubsection{Temporal constraints}
We first present encodings for the $\G$ and $\F$
operators. We will use these encodings to define the encoding for the ${\cal U}_{[a,b]}$ operator.\\

\noindent \underline{Always: $\psi = \G_{[a,b]} \varphi$}\\

\noindent Let $a^N_t = \min(t+a,N)$ and $b^N_t = \min(t+b,N)$\\
Define
$
z^\psi_t =  \land_{i = a^N_t}^{b^N_t}z^\varphi_{i} \land (\bigvee_{j=1}^{N} l_j \land  \bigwedge_{i = \hat{a}^N_j}^{\hat{b}^N_j}z^\varphi_{i})
$

\noindent The logical operation $\land$ on the variables $z^\varphi_{i}$ here is as defined in Section
\ref{encode:bool}. Intuitively, this encoding enforces that the formula $\varphi$ is satisfied at every time step on the interval $[a,b]$ relative to time step $t$.\\

\noindent \underline{Eventually: $\psi = \F_{[a,b]} \varphi$}\\

\noindent Define
$
z^\psi_t =  \lor_{i = a^N_t}^{b^N_t}z^\varphi_{i} \land (\bigvee_{j=1}^{N} l_j \land  \bigvee_{i = \hat{a}^N_j}^{\hat{b}^N_j}z^\varphi_{i})
$
This encoding enforces that the formula $\varphi$ is satisfied at some time step on the interval $[a,b]$ relative to time step $t$.\\

\noindent \underline{Until: $\psi = \varphi_1~{\cal U}_{[a,b]}~\varphi_2$}\\

\noindent The bounded until operator  ${\cal U}_{[a,b]}$ can be defined in terms of the unbounded ${\cal U}$ (inherited from LTL) as follows \cite{DonzeFM13}:
\[
\varphi_1~{\cal U}_{[a,b]}~\varphi_2 = \G_{[0,a]}\varphi_1 \land \F_{[a,b]}\varphi_2 \land \F_{[a,a]}(\varphi_1~{\cal U}~\varphi_2)
\]

We will use the encoding of the unbounded $\cal U$ from \cite{BiereHJLS06}. When encoding over infinite trajectories, this requires an auxiliary encoding that prevents the pitfalls of circular reasoning on the finite parametrization of the infinite sequences. The interested reader is referred to \cite{BiereHJLS06} for the details of the encoding. 
The auxiliary encoding of the unbounded until is\\  $\llangle\varphi_1~{\cal U}~\varphi_2\rrangle_t =$\\
 $~~~~~~
   \begin{cases}
     z^{\varphi_2}_t \lor (z^{\varphi_1}_t \land \llangle\varphi_1~{\cal U}~\varphi_2\rrangle_{t+1}), & t = 1, ..., N-1\\
    z^{\varphi_2}_N. &
   \end{cases}
  $\\

\noindent With this definition in place, we define 
\[
z^{\varphi_1~{\cal U}~\varphi_2}_t = z^{\varphi_2}_t \lor (z^{\varphi_1}_t \land z^{\varphi_1~{\cal U}~\varphi_2}_{t+1})
\]
for $t = 1, ..., N-1$, and 
\[
z^{\varphi_1~{\cal U}~\varphi_2}_N = z^{\varphi_2}_N \lor (z^{\varphi_1}_N \land (\bigvee_{j=1}^{N} (l_j \land\llangle\varphi_1~{\cal U}~\varphi_2\rrangle_{j}))).
\]

\noindent Given this encoding of the unbounded until and the encodings of $\G_{[a,b]}$ and $\F_{[a,b]}$ above, we can encode
\[
z^{\varphi_1~{\cal U}_{[a,b]}~\varphi_2}_t = z^{\G_{[0,a]}\varphi_1}_t \land z^{\F_{[a,b]}\varphi_2}_t \land z^{\F_{[a,a]}(\varphi_1~{\cal U}~\varphi_2)}_t.
\]

\noindent By induction on the structure of STL formulas $\varphi$, $z^\varphi_t = 1$ if and only if $\varphi$ holds on the system at time $t$. With this motivation, given a specification $\varphi$, we add a final constraint:
\begin{equation}\label{spec_holds}
z^\varphi_0 = 1.
\end{equation}

For a bounded horizon formula, the union of the STL constraints, loop constraints and system constraints gives the MILP encoding of
Problem \ref{prob:open}; this enables checking feasibility of this set of constraints and finding a solution
using an MILP solver. Given an objective function on runs of the system, this approach also enables finding the optimal open-loop trajectory that satisfies the STL specification. 
Algorithm \ref{alg:openloop} reviews the procedure for solving Problem \ref{prob:open}.

\begin{algorithm}
\begin{algorithmic}[1]
\Procedure{OPEN\_ LOOP}{$f, x_0, \w, N, \varphi, J$}
\State LOOP\_CONSTRAINTS $\leftarrow$ Sec. \ref{encode:lasso}
\State SYSTEM\_CONSTRAINTS  $\leftarrow$ Sec. \ref{encode:sys_dyn}
\State STL\_CONSTRAINTS $\leftarrow$ Sec. \ref{encode:bool} OR Sec. \ref{robust}
\State \label{step:1}
\[
\begin{array}{lll}
\displaystyle {\bf u}^* \leftarrow \argmin_{{\bf u}\in {\cal U}^N} && J(x_0, \u, \w, \f)\\
&\mbox{s.t. }&\text{LOOP\_CONSTRAINTS}\\
&&\text{SYSTEM\_CONSTRAINTS}\\
&&\text{STL\_CONSTRAINTS}
\end{array}
\]
Return ${\bf u}^*$
\EndProcedure
\caption{Algorithm for Problem \ref{prob:open}}
\label{alg:openloop}
\end{algorithmic}
\end{algorithm}


\subsection{Quantitative Encoding}\label{robust}
The robustness of satisfaction of the STL specification, as defined in \ref{robust_defn}, provides a natural objective for the MILP defined in Section \ref{sec:milp}, either in the absence of, or as a complement to domain-specific objectives on runs of the system. The robustness can be computed recursively on the structure of the formula in conjunction with the generation of constraints. Moreover, since $\max$ and $\min$ operations can be expressed in an MILP formulation using additional binary variables, this does not add complexity to the encoding, although the additional variables do make it more computationally expensive in practice. 

In this section, we sketch the MILP encoding of the predicates and Boolean operators using the quantitative semantics; the encoding of the temporal operators builds on these encodings, as in Section \ref{sec:milp}. Given a formula $\varphi$, we introduce a variable $r^\varphi_t$, and an associated set of MILP constraints such that $r^\varphi_t > 0$ if and only if $\varphi$ holds at position $t$. We recursively generate the MILP constraints, such that $r^\varphi_0$ determines whether a formula $\varphi$ holds in the initial state. Additionally, we enforce $r^\varphi_t = \r^{\varphi}(\x,t)$.
 
For each predicate $\mu \in P$, we now introduce variables $r^\mu_t$ for time indices $t = 0,1,...,N$, and set $r^\mu_t = \mu(x_t)$. To define $r^\psi_t$, where $\psi$ is a Boolean formula, we inductively assume that each operand $\varphi$ has a corresponding variable $r^\varphi_t = \r^{\varphi}(\x,t)$. Then the Boolean operations are defined as:

\underline{Negation: $r^\psi_t = \neg r^\phi_t$}~~~~~~~~~~~~~
$
r^\psi_t = - r^\phi_t
$
\smallskip

\underline{Conjunction: $r^\psi_t = \displaystyle\land_{i=1}^mr^{\varphi_i}_{t_i}$}
\begin{align}
&\sum_{i=1}^m p^{\varphi_i}_{t_i} = 1\label{line1}\\
&r^\psi_t \le r^{\varphi_i}_{t_i}, i = 1,...,m\label{line2}\\
&r^{\varphi_i}_{t_i} - (1-p^{\varphi_i}_{t_i})M \le r^\psi_t \le r^{\varphi_i}_{t_i} + M(1-p^{\varphi_i}_{t_i})\label{line3}
\end{align}

\noindent where we introduce new binary variables $p^{\varphi_i}_{t_i}$ for $i = 1,...,m$, and $M$ is a sufficiently large positive number.
Then equation (\ref{line1}) enforces that there is one and only one $j \in \{1,...,m\}$ such that $p^{\varphi_j}_{t_i} = 1$, equation (\ref{line2}) ensures that $r^\psi_t$ is smaller than all $r^{\varphi_i}_{t_i}$, and equation (\ref{line3}) enforces that $r^\psi_t = r^{\varphi_j}_{t_j}$ if and only if $p^{\varphi_j}_{t_j} = 1$. Together, these constraints enforce that $r^\psi_t = \min_i(r^{\varphi_i}_{t_i})$.

\smallskip

\underline{Disjunction: $\psi =  \displaystyle\lor_{i=1}^mr^{\varphi_i}_{t_i}$}
is encoded similarly to conjunction, replacing (\ref{line2}) with 
$
r^\psi_t \ge r^{\varphi_i}_{t_i}, i = 1,...,m.
$
Using a similar reasoning to that above, this enforces $r^\psi_t = \max_i(r^{\varphi_i}_{t_i})$.

The encoding for bounded temporal operators is defined as in Section \ref{sec:milp}; robustness for the unbounded until is defined using $\sup$ and $\inf$ instead of $\max$ and $\min$, but these are equivalent on our finite trajectory representation with discrete time. By induction on the structure of STL formulas $\varphi$, this construction yields $r^\varphi_t  > 0$ if and only if $\varphi$ is satisfied at time $t$. Therefore, we can replace the constraints over $z^\varphi_t$ in Section \ref{sec:milp} by these constraints that compute the value of $r^\varphi_t$, and instead of (\ref{spec_holds}), add the constraint
$
r^\varphi_0 > 0.
$

Since we consider only the discrete time semantics of \ac{STL} in this work, the Boolean encoding in Section \ref{sec:milp} could be achieved by converting each formula to \ac{LTL}, and using existing encodings such as that in \cite{WolffTM14}. However, the robustness-based encoding we presented in this section has no natural analog for \ac{LTL}. The advantage of this encoding is that it allows us to maximize the value of $r^\varphi_0$, obtaining a trajectory that maximizes robustness of satisfaction. Additionally, an encoding based on robustness has the advantage of allowing the STL constraints to be softened or hardened as necessary. For example, if the original problem is infeasible, we can allow $\r^\varphi_0 > -\epsilon$ for some $\epsilon > 0$, thereby easily modifying the problem to allow a limited violation of the STL property.

The disadvantage is that it is more expensive to compute, due the the additional binary variables
introduced during each Boolean operation. Additionally, including robustness as an objective makes
the cost function inherently non-convex, with potentially many local minima, and harder to
optimize. On the other hand, the robustness constraints are more easily relaxed, allowing us to
use a simpler cost function, which can make the problem more tractable.

\subsection{Complexity}
In general, our synthesis algorithm has the same complexity as \ac{MILP}s, which are NP-hard, hence computationally challenging when the dimensions of the problem grow.  It is nevertheless appropriate to characterize the computational costs of our encoding and approach in terms of the number of variables and constraints in the resulting MILP. In practice, one measure of problem size is the number of binary variables required to indicate the satisfaction of the predicates $\mu$. This depends directly on the number of predicates used in the STL formula $\varphi$. 

For the Boolean encoding, if $P$ is the set of predicates used in the formula, then $O(N \cdot |P|)$ binary variables are introduced. In addition, continuous variables are introduced during the MILP encoding of the STL formula. The number of continuous variables used is $O(N \cdot |\varphi|)$, where $|\varphi|$ is the length (i.e. the number of operators) of the formula. 

For the robustness-based encoding, $O(N \cdot |P|)$ continuous variables are introduced (one per predicate per time step). In addition, binary variables are introduced during the MILP encoding of each
operator in the STL formula. The number of binary variables used is thus $O(N \cdot |\varphi|)$, where $|\varphi|$ is the number of operators of the formula. 

Our synthesis algorithm also has polynomial runtime for the following fragment of STL.

\begin{definition}[SNN-STL]
\emph{Safe Negation-Normal STL (SNN-STL)} is the fragment of STL generated by the recursive grammar
$$
\f ::= \pi^\mu \mid \neg \pi^\mu \mid \f_1 \land \f_2 \mid \G_{[a,b]}~\f
$$
\end{definition}

SNN-STL has the following properties:
\begin{itemize}
\item All negations appear only on atomic propositions (pushed down to the leaf nodes of the formula abstract syntax tree).
\item The only temporal operators are $\G$ (with unbounded and bounded intervals).
\item Only conjunctions are allowed, no disjunctions.
\end{itemize}

Such specifications are expressive enough to enforce, e.g., safety specifications in environments where the system state is confined to a conjunction of polyhedra. 

Let 
\[
\mathtt{OPEN\_LOOP\_NO\_STL}(f, x_0, N, \varphi, J)
\]

\noindent denote the procedure that is identical to Algorithm \ref{alg:openloop}, except that the optimization problem in Step \ref{step:1} is solved with $\mathtt{STL\_CONSTRAINTS} = \emptyset$.

\begin{theorem}[Polynomial-time Synthesis for SNN-STL]\label{thm:snn}
Suppose that $\varphi$ is in SNN-LTL and has linear predicates. Then if $\mathtt{OPEN\_LOOP\_NO\_STL}(f, x_0, \w, N, \varphi, J)$
is convex, so is $\mathtt{OPEN\_LOOP}(f, x_0, N, \varphi, J)$.
\end{theorem}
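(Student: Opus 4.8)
The plan is to show that, under the stated hypotheses, adding $\mathtt{STL\_CONSTRAINTS}$ to $\mathtt{OPEN\_LOOP\_NO\_STL}$ only intersects its (convex, by assumption) feasible set with a set cut out by finitely many linear inequalities, while leaving the cost $J$ unchanged; convexity of $\mathtt{OPEN\_LOOP}$ is then immediate. The core of the argument is a structural induction on the SNN-STL formula showing that once the final constraint $z^\varphi_0 = 1$ is imposed, every auxiliary Boolean variable introduced by the encoding of Section~\ref{sec:milp} is \emph{forced} to a single determined value, so that after eliminating those variables what remains is a conjunction of linear constraints on the state variables $x_0,\dots,x_N$.

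Concretely, I would prove by induction on an SNN-STL subformula $\psi$ that, for every index $t$, the constraints generated for $z^\psi_t$ conjoined with $z^\psi_t = 1$ force the values of all auxiliary Boolean variables appearing below $\psi$ and reduce to a conjunction of linear inequalities on $x_0,\dots,x_N$ (together with the loop indicators $l_j$). For the base cases: if $\psi = \pi^\mu$ with $\mu$ linear, setting $z^\mu_t = 1$ in the predicate encoding of Section~\ref{encode:preds} collapses the pair of constraints to $\mu(x_t) \ge \epsilon_t$ (the companion inequality becoming a fixed, slack bound), which is linear; if $\psi = \neg\pi^\mu$, the negation rule of Section~\ref{encode:bool} forces $z^\mu_t = 0$, which collapses the predicate encoding to $-\mu(x_t) \ge \epsilon_t$. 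For the inductive step with $\psi = \varphi_1 \land \varphi_2$, the conjunction encoding with $z^\psi_t = 1$ forces $z^{\varphi_1}_t = z^{\varphi_2}_t = 1$, and the induction hypothesis applies to each operand; for $\psi = \G_{[a,b]}\varphi$, the ``always'' encoding is a conjunction of copies $z^\varphi_i$ over the in-prefix interval (and, via the loop term, over the wrapped-around indices), so $z^\psi_t = 1$ forces each of these $z^\varphi_i$ to $1$ and the induction hypothesis again applies. The essential point is that SNN-STL admits \emph{no disjunction, no $\F$, and no $\until$}: each connective of the fragment---conjunction and $\G$---has an encoding that is monotone under forcing the output to $1$, so this forcing propagates deterministically from $z^\varphi_0$ down to the leaves and never leaves a genuine disjunction among the continuous variables.

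The single place where a disjunction appears \emph{syntactically} is the loop term $\bigvee_{j=1}^{N}\bigl(l_j \land \bigwedge_i z^\varphi_i\bigr)$ inside the $\G$ encoding, and I expect making this airtight to be the main obstacle. Two observations handle it: first, the $l_j$ are exactly the binary variables already present in $\mathtt{LOOP\_CONSTRAINTS}$ (Section~\ref{encode:lasso}), so the $\G$ encoding introduces no new integrality; second, these wrap-around terms are vacuous whenever the interval $[t+a,t+b]$ falls within the prefix, and when they are active the loop constraint $\sum_j l_j = 1$ pins down a single disjunct---so under the hypothesis that $\mathtt{OPEN\_LOOP\_NO\_STL}$ (which already contains $\mathtt{LOOP\_CONSTRAINTS}$) is convex, the loop term contributes only linear constraints. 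Combining this with the induction, $\mathtt{STL\_CONSTRAINTS}\cup\{z^\varphi_0 = 1\}$ is a finite system of linear inequalities; intersecting it with the convex feasible region of $\mathtt{OPEN\_LOOP\_NO\_STL}$ keeps the region convex, and since the two programs share the objective $J$, $\mathtt{OPEN\_LOOP}$ is a convex program.

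Finally, the same induction covers the quantitative encoding of Section~\ref{robust}: for SNN-STL, $r^\mu_t = \mu(x_t)$ and $r^{\neg\mu}_t = -\mu(x_t)$ are affine, and because only $\min$ (coming from $\land$ and $\G$) is ever applied above them---never $\max$---the variable $r^\varphi_0$ equals a pointwise minimum of finitely many affine functions of the decision variables and is therefore concave; the binary variables $p^{\varphi_i}_{t_i}$ of Section~\ref{robust} are not needed, since only the $r^\psi_t \le r^{\varphi_i}_{t_i}$ half of the $\min$ encoding is required when no $\max$ sits above it. Hence $r^\varphi_0 > 0$ defines an (open) convex set and using $r^\varphi_0$ as a reward preserves convexity, so the conclusion holds for the robustness-based variant as well.
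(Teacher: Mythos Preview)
Your proposal is correct and follows the same structural-induction strategy as the paper: show that each SNN-STL connective (linear predicate, negated predicate, conjunction, $\G$) contributes only convex constraints, so that $\mathtt{STL\_CONSTRAINTS}$ intersects the convex feasible region of $\mathtt{OPEN\_LOOP\_NO\_STL}$ with another convex set. Your treatment is in fact more careful than the paper's---you explicitly trace how $z^\varphi_0 = 1$ propagates downward to fix every auxiliary variable, you isolate and handle the loop-indicator disjunction inside the $\G$ encoding, and you extend the argument to the quantitative encoding---whereas the paper's proof is a brief three-sentence sketch that does not address these points.
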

\begin{proof}
The proof proceeds by induction on the structure of the formula, showing that the constraints added by $\mathtt{STL\_CONSTRAINTS}$ for each operator
restrict the solution to a convex set. 
First note that since the predicates $\pi^\mu$ are linear, negation of predicates preserves convexity, since $\neg \pi^\mu$ is also linear.
Because the intersection of convex sets is convex, the conjunction of a set of convex constraints is also convex. Finally, since the $\G$ operator is implemented
in terms of conjunctions, the constraints imposed by $\G$ also preserve convexity of the resulting optimization problem.
%
%
%
%
%
\end{proof}

Informally, Theorem \ref{thm:snn} states that our encoding of SNN-STL constraints into an MILP preserves convexity in the resulting optimization problem.
The resulting optimization problem is therefore encodable as an \ac{LP}, i.e. without the use of integer variables.

\begin{corollary}
Algorithm \ref{alg:openloop} is polynomial-time for SNN-STL specifications $\varphi$.
\end{corollary}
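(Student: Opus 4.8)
The plan is to read the corollary off Theorem~\ref{thm:snn} together with the variable/constraint counts from the Complexity subsection, the only real work being to upgrade ``convex'' to ``linear program of polynomial size.'' First I would make the standing hypothesis explicit: the plant admits a linear \ac{MILP} (equivalently, \ac{LP}) formulation over the horizon $N$, so that $\mathtt{OPEN\_LOOP\_NO\_STL}$ is an \ac{LP} whose number of variables and constraints is polynomial in $N$ and the size of the plant description. For SNN-STL one can moreover discard the lasso constraints of Section~\ref{encode:lasso}: the only temporal modality is $\G$, and the encoding only ever needs to \emph{enforce} subformulas rather than record their truth values, so an unbounded $\G_{[a,\infty)}\f$ on a finite horizon reduces to conjoining the predicate constraints at every step $\ge a$, with no loop-selecting binaries $l_j$.

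Next I would apply Theorem~\ref{thm:snn} to conclude that $\mathtt{STL\_CONSTRAINTS}$ cuts out a convex set, and then sharpen this: for linear predicates the constraint for $\pi^\mu$ (resp.\ $\neg\pi^\mu$) is the single linear inequality $\mu(x_t) > 0$ (resp.\ $-\mu(x_t) > 0$), not the big-$M$/$\epsilon_t$ pair of Section~\ref{encode:preds} that introduces the binary $z^\mu_t$; and $\land$ and $\G_{[a,b]}$ merely take conjunctions of such inequalities. Hence the STL encoding introduces no integer variables at all, and the feasible region of Step~\ref{step:1} is a polyhedron. Combined with the assumed \ac{LP} structure of the system (and, where relevant, loop) constraints, the optimization in Step~\ref{step:1} is a linear program.

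Finally I would bound its size and invoke polynomial-time linear programming. By the counting in the Complexity subsection, the SNN-STL encoding adds $O(N\cdot|P|)$ predicate constraints and $O(N\cdot|\f|)$ auxiliary continuous variables --- all polynomial in $N$, $|\f|$, and the plant description --- so Step~\ref{step:1} is an \ac{LP} of polynomial size, solvable in polynomial time (Khachiyan's ellipsoid method, or a polynomial interior-point method). Therefore Algorithm~\ref{alg:openloop} runs in polynomial time on SNN-STL specifications, proving the corollary.

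The main obstacle --- and the step that deserves the most care --- is the second one: certifying that \emph{nothing} in the SNN-STL pipeline forces an integer variable. This means going construct by construct (linear predicate, negated linear predicate, conjunction, $\G$ with bounded and unbounded intervals) and checking that the lasso encoding is either unnecessary for this fragment or itself collapses to linear constraints, so that the ``convex program'' supplied by Theorem~\ref{thm:snn} is genuinely an \ac{LP} and not merely the convex relaxation of a mixed-integer program. Everything else is routine: a finite intersection of polyhedra is a polyhedron, and \ac{LP} feasibility and optimization with rational data are in P.
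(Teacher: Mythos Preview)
The paper offers no separate proof of the corollary; it is stated as an immediate consequence of Theorem~\ref{thm:snn} and the preceding sentence that the resulting problem ``is therefore encodable as an \ac{LP}, i.e.\ without the use of integer variables.'' Your proposal supplies precisely the argument the paper elides: you make explicit the hypothesis that the non-STL problem is already an \ac{LP} of polynomial size, verify construct-by-construct that the SNN-STL encoding contributes only linear inequalities and no integer variables, invoke the $O(N\cdot|P|)$ and $O(N\cdot|\f|)$ counts from the Complexity subsection, and appeal to polynomial-time solvability of linear programs. This is correct and strictly more rigorous than the paper's one-line treatment.

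One place where you do more than the paper: the discussion of discarding the lasso constraints. In the paper's framing, the hypothesis of Theorem~\ref{thm:snn} is that $\mathtt{OPEN\_LOOP\_NO\_STL}$ --- which by definition already includes the loop constraints of Section~\ref{encode:lasso} --- is convex, so the binaries $l_j$ are assumed away by hypothesis rather than argued away. Your observation that the lasso is dispensable for SNN-STL is a legitimate alternative (and arguably cleaner, since the SNN-STL grammar admits only bounded $\G_{[a,b]}$), but it analyzes a mild simplification of Algorithm~\ref{alg:openloop} rather than invoking the theorem's hypothesis directly. Likewise, your replacement of the big-$M$ predicate encoding of Section~\ref{encode:preds} by the bare inequality $\mu(x_t)>0$ is exactly the simplification the paper intends by ``encodable as an \ac{LP},'' though the paper never spells it out.
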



\section{Model Predictive Control Synthesis}\label{mpc_synth}
In this section, we will describe a solution to Problem \ref{prob:mpc} by adding \ac{STL} constraints to an \ac{MPC} problem formulation. At each step $t$ of the MPC computation, we will search for a finite trajectory of fixed horizon length $H$, such that the accumulated trajectory satisfies $\varphi$. 

\begin{figure}
\begin{center}
\begin{tikzpicture}[node distance=.8 cm, >=stealth',every join/.style=->,
box/.style={draw, text centered, text width=2cm, minimum height=1cm},
nobox/.style={text centered, text width=1.6cm,
},transform shape,scale=0.9
]

\node (sol) [nobox] {$\u*=$ $u^*_0u^*_1\hdots u^*_{N-1}$};
\node (synthesis) [box, right=of sol] {Synthesis};
\node (in) [nobox, above=2cm of synthesis] {$\f,\Sigma_d,J$};
\node (controller) [above=.5cm of sol] {\bf Controller}; 

\node (plant) [box, below= 2cm of synthesis, fill=black!5, text width=2cm] {\textbf{Plant $\Sigma$}\\[.2cm]  $x_0$};
\node (w0p) [nobox, below= .5cm of plant] {$w_0$};
\node (w0d) [nobox, below= .5cm of synthesis] {$w_0w_1\hdots w_{N-1}$};

\begin{pgfonlayer}{background}
    \node [draw,inner sep = 1.4cm, minimum height=1.8cm, fill=black!5,fit=(synthesis) (sol)] {};
  \end{pgfonlayer}

\draw[->, thick] (in) -- node [pos=0.2,left] {} (synthesis);
\draw[->] (w0d) -- (synthesis);  
\draw[->] (w0p) -- (plant);  

{[start chain]
\chainin (synthesis) [join];
\chainin (sol) [join];
}

\draw[->] (sol) -- ++(-1.8,0) -- ++(0,-3.15) -- (plant) node [pos=0.85, above] {$u_0^*$};
\draw[->] (plant) -- ++(5,0) node [pos=0.1, above] {$x_1$} -- ++(0,3.15) -- (synthesis) node [pos=0.6, above] {$x_0$};

\end{tikzpicture}
 \end{center}

\caption{Closed-loop (MPC) problem formulation. As in the open loop scenario, a sequence of control inputs is synthesized from
  the specifications, dynamics and cost function. However, at each time step, only the first computed
input is used by the plant.}  
\label{fig:closed}
\end{figure}
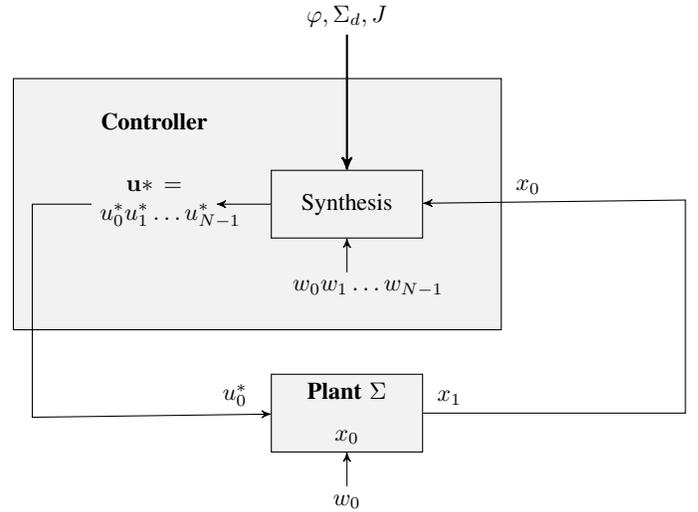

\subsection{Synthesis for bounded-time STL formulas}\label{mpc_bounded}
The length of the horizon $H$ is chosen to be at least the bound of formula $\varphi$. 
At time step $0$, we will synthesize control ${\bf u}^{H,0}$ using the open-loop formulation in Section \ref{openloop}, including the STL constraints on the length-$H$ trajectory, but without the loop constraints. We will then execute only the first time step ${u}^{H,0}_0$. 
At the next step of the \ac{MPC}, we will solve for ${\bf u}^{H,1}$, while constraining the previous values of $x_0, u_0$ in the MILP, and the STL constraints on the trajectory up to time $H$. In this manner, we will keep track of the history of states in order to ensure that the formula is satisfied over the length-$H$ prefix of the trajectory, while solving for ${\bf u}^{H,t}$ at every time step $t$. 

\subsection{Extension to unbounded formulas}\label{mpc_unbounded}
For certain types of unbounded formulas, we can stitch together trajectories of length $H$ using a receding horizon approach, to produce an infinite computation that satisfies the STL formula. An example of this is safety properties, i.e. $\varphi = \G (\varphi_{MPC})$ for bounded STL formulas $\varphi_{MPC}$. For such formulas, at each step of the \ac{MPC} computation, we will search for a finite trajectory of horizon length $H$ (determined from $\varphi_{MPC}$ as above) that satisfies $\varphi_{MPC}$.

We now describe this approach in more detail. At each step $t$ of the receding horizon control computation, we will employ the open-loop approach in Section \ref{openloop} to find a finite trajectory of fixed horizon length $H$, such that the trajectory accumulated over time satisfies $\varphi$. Given a specification $\varphi=\G\varphi_{MPC}$, where $\varphi_{MPC}$ is a bounded-time formula with bound $H$. In this case, we can stitch together trajectories of length $H$ using a receding horizon approach to produce an infinite computation that satisfies the STL formula. At each step of the receding horizon computation, we search for a finite trajectory of horizon length $2H$, keeping track of the past values and robustness constraints necessary to determine satisfaction of $\varphi$ at every time step in the trajectory.  Note that we omit the loop constraints in this approach, because at each step we search for a finite trajectory, rather than an infinite trajectory with a finite parametrization.

First we define a procedure 
\[
\mathtt{OPEN\_LOOP}^*(f, x_0, \w, N, \G\varphi_{MPC}, J,{\bf P}^H,{\bf u}_{old}^t)
\]
that takes additional inputs ${\bf P} = \{P_0, P_1, ..., P_{H-1}\}$ and ${\bf u}_{old}^t = u_0,u_1,...,u_{t-1}$, and is identical to Algorithm \ref{alg:openloop}, except that the optimization problem posed in Step \ref{step:1} is solved without the loop constraints, and with the added constraints:
\[
\begin{array}{l}
\r^{\varphi}({\bf f}(x_0,{\bf u},{\bf w}), i) > P_i ~\forall i \in [0, H-1]\\
{\bf u}[i...t] = {\bf u}_{old}^t
\end{array}
\]

We then define a receding horizon control procedure as in Algorithm \ref{alg:mpc}. At each step, we are optimizing over a horizon of $2H$. We assume available a method $\mathtt{PREDICT\_W}(t)$ for predicting the sequence of $2H$ environment inputs starting at time step $t$.

\begin{algorithm}
\begin{algorithmic}[1]
\Procedure{MPC}{$f, x_0, \phi = \G \varphi_{MPC}, J$}
\State Let $M$ be a large positive constant. 
\State Let $H$ be the bound of $\varphi_{MPC}$.
\State Set $P_0 = 0$ and $P_i = -M~\forall 0 < i \le H$. 
\State $\w^t \leftarrow \mathtt{PREDICT\_W}(0)$.
\State Compute ${\bf u}^0 = u_0^0,u_1^0,....,u_{2H-1}^0$ as:
\[
{\bf u}^0 \leftarrow \mathtt{OPEN\_LOOP}^*(f, x_0, \w^0, 2H, \G_{[0,H]}\varphi_{MPC}, J,{\bf P}^H,\emptyset)
\]
\For  {t=1; t<=H;t=t+1}
\State Set ${\bf u}_{old}^t = u_0^0,u_1^1,u_2^2,...,u_{t-1}^{t-1}$.
\State Set $P_i = 0$ for $0 \le i \le t$, $P_i = -M~\forall t < i \le H$.
\State $\w^t \leftarrow \mathtt{PREDICT\_W}(t)$.
\State Compute ${\bf u}^t = u_0^t,u_1^t,....,u_{2H-1}^t$ as:
\[
{\bf u}^t \leftarrow \mathtt{OPEN\_LOOP}^*(f, x_t, \w^t, 2H, \G_{[0,H]}\varphi_{MPC}, J,{\bf P}^H,{\bf u}_{old}^t)
\]
\EndFor
\While  {$\mathtt{True}$}
\State Set ${\bf u}_{old}^t = u_1^{t-1},u_2^{t-1},u_3^{t-1},...,u_{t}^{t-1}$.
\State Set $P_i = 0$ for $0 \le i \le H$.
\State $\w^t \leftarrow \mathtt{PREDICT\_W}(t)$.
\[
{\bf u}^t \leftarrow \mathtt{OPEN\_LOOP}^*(f, x_t, \w^t, 2H, \G_{[0,H]}\varphi_{MPC}, J,{\bf P}^H,{\bf u}_{old}^t)
\]
\EndWhile
\EndProcedure
\caption{MPC Algorithm for Problem \ref{prob:mpc}}
\label{alg:mpc}
\end{algorithmic}
\end{algorithm}

Algorithm \ref{alg:mpc} has two phases, a \emph{transient} phase (Lines 4-10) and a \emph{stationary} phase (Lines 11-14). The transient phase applies until an initial control sequence of length $H$ has been computed, and the stationary phase follows. In the transient phase, the number of stored previous inputs (${\bf u}^t_{old}$) as well as the number of time steps at which formula $\varphi_{MPC}$ is enforced (i.e. time steps for which $P_i = 0$) grows by one at each iteration, until they both attain a maximum of $H$ at iteration $H$. Every following iteration uses a window of size $H$ for stored previous inputs, and sets all $P_i = 0$. The size-$H$ window of previously-computed inputs advances forward one step in time at each iteration after step $H$. In this manner, we keep a record of the previously computed inputs required to ensure satisfaction of $\varphi_{MPC}$ up to $H$ time steps in the past.

We now show that if Algorithm \ref{alg:mpc} does not terminate, then the resulting infinite sequence of control inputs enforces satisfaction of the specification $\phi = \G \varphi_{MPC}$.

\begin{theorem}
Let $\phi = \G \varphi_{MPC}$, and assume that ${\bf u}^*$ is an infinite sequence of control inputs generated by setting ${\bf u}^*[t] = u^t_0$, where ${\bf u}^t=u^t_0u^t_1...u^t_{2H-1}$ is the control input sequence of length $2H$ generated by Algorithm \ref{alg:mpc} at time $t$. Then 
${\bf f}(x_0, {\bf u}^*,{\bf w}) \models \varphi$.
\end{theorem}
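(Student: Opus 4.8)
The plan is to reduce the claim to a statement about individual time steps and then combine the bounded-time structure of $\varphi_{MPC}$ with the robustness constraints enforced inside each subproblem. Since $\varphi = \G\varphi_{MPC}$, it suffices to show $({\bf f}(x_0,{\bf u}^*,{\bf w}),t_k)\models\varphi_{MPC}$ for every $k\ge 0$; write $\xi^* = {\bf f}(x_0,{\bf u}^*,{\bf w})$ for the executed trajectory. The first step is the observation that, because $\varphi_{MPC}$ is bounded-time with bound $H$ and STL has only future-time operators, the truth of $\varphi_{MPC}$ at step $k$ depends only on the finite window $(x_k u_k w_k)\cdots(x_{k+H}u_{k+H}w_{k+H})$ of $\xi^*$, so any run agreeing with $\xi^*$ on that window agrees on $(\cdot,t_k)\models\varphi_{MPC}$. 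I would then fix $k$ and single out the iteration of Algorithm~\ref{alg:mpc} that both enforces $\varphi_{MPC}$ at the right local index and produces a horizon trajectory that matches $\xi^*$ on that window.

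For the first requirement: since the algorithm is assumed not to terminate, the MILP solved by $\mathtt{OPEN\_LOOP}^*$ at every iteration $t$ is feasible, so its solution $\xi^t = \xi(x_t,{\bf u}^t,{\bf w}^t)$ satisfies all the added constraints, in particular $\r^{\G_{[0,H]}\varphi_{MPC}}(\xi^t,t_i) > 0$ at every index $i$ with $P_i = 0$. By soundness of the quantitative semantics ($\r^\psi(\xi,t)>0\Rightarrow(\xi,t)\models\psi$, already established), this gives $(\xi^t,t_j)\models\varphi_{MPC}$ for every $j$ with $i\le j\le i+H$. I would then track how the set of indices with $P_i = 0$ grows by one per iteration in the transient phase and stabilizes in the stationary phase, so as to confirm that all relevant local steps of every horizon are covered.

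For the second requirement: the constraint ${\bf u}[\,\cdot\,] = {\bf u}_{old}^t$ pins the leading block of the iteration-$t$ horizon inputs to inputs already committed, and the rule ${\bf u}^*[t] = u_0^t$ makes exactly those pinned inputs the ones executed by the plant; chaining ${\bf u}_{old}^t$ through consecutive iterations then shows the leading block of $\xi^t$ uses precisely ${\bf u}^*[t],{\bf u}^*[t+1],\dots$. Since $f_d$ is deterministic given an initial state and input/disturbance sequences, since iteration $t$ is seeded with the true current plant state $x_t$, and under the standing assumption that $\mathtt{PREDICT\_W}(t)$ returns the disturbances that actually occur (so ${\bf w}^t$ agrees with the relevant suffix of ${\bf w}$), the pinned prefix of $\xi^t$ then coincides with the corresponding window of $\xi^*$ state for state. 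Putting the pieces together: in the stationary phase, iteration $t=k$ sets $P_0=0$ and pins the inputs at global steps $k,\dots,k+H-1$, so $\xi^k$ agrees with $\xi^*$ on the window that decides $\varphi_{MPC}$ at step $k$ while guaranteeing $(\xi^k,t_0)\models\varphi_{MPC}$; hence $(\xi^*,t_k)\models\varphi_{MPC}$. The finitely many indices $k\le H$ would be handled identically using the transient iteration that first pins their window, and ranging over all $k$ yields $\xi^*\models\G\varphi_{MPC}=\varphi$.

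I expect the main obstacle to be making the second requirement fully rigorous: it needs careful bookkeeping relating the shifted windows ${\bf u}_{old}^t$ across successive iterations (and across the transient/stationary boundary) to the executed sequence ${\bf u}^*$, followed by an induction showing the states induced inside the iteration-$t$ MILP coincide with those of $\xi^*$. This is also the step that forces an explicit hypothesis on $\mathtt{PREDICT\_W}$: if predicted and realized disturbances may differ, the trajectory the MILP certifies need not be the executed one and the conclusion can fail, so I would state upfront that $\mathtt{PREDICT\_W}$ is exact (or that $\varphi_{MPC}$ is robust to the prediction error). The first requirement, by contrast, is essentially bookkeeping over index sets on top of the already-proved soundness of the robustness encoding.
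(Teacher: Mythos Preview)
Your proposal is correct and follows essentially the same line as the paper: reduce $\G\varphi_{MPC}$ to showing $(\xi^*,t_k)\models\varphi_{MPC}$ for every $k$, use the bound $H$ of $\varphi_{MPC}$ to localize satisfaction to the window of length $H$ starting at $k$, and then exhibit an iteration of Algorithm~2 whose MILP solution (i) enforces $\varphi_{MPC}$ at the appropriate local index via the $P_i=0$ robustness constraint and (ii) agrees with the executed trajectory on that window because the pinned inputs coincide with ${\bf u}^*$. The paper's proof is terser and anchors the argument at iteration $t+H$ rather than at iteration $k$: it reads off ${\bf u}_{old}^{t+H}={\bf u}^*[t{:}t{+}H{-}1]$ directly via a short chain of equalities through the sliding-window updates, then invokes feasibility of that iteration's MILP to conclude ${\bf f}(x_t,{\bf u}_{old}^{t+H},{\bf w}^H)\models\varphi_{MPC}$. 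Your explicit invocation of determinism of $f_d$, soundness of the quantitative semantics, and the exactness hypothesis on $\mathtt{PREDICT\_W}$ goes slightly beyond what the paper spells out in the proof proper, but matches the paper's surrounding discussion.
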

\begin{proof}
Since $H$ is the bound of $\varphi_{MPC}$, the satisfaction of $\varphi_{MPC}$ at time $t$ is established by the control inputs ${\bf u}^*[t:t+H-1]$. At time $t+H$, 
\[
\begin{array}{lll}
{\bf u}_{old}^{t+H} &=& u^{t+H}_0,u^{t+H}_1,u^{t+H}_2,...,u^{t+H}_{t+H-1} \\
&=& u^{t+H-1}_1,u^{t+H-1}_2,u^{t+H-1}_3,...,u^{t+H-1}_{t+H} \\
&=& u^{t}_{t},u^{t+1}_{t+1},u^{t+2}_{t+2},...,u^{t+H-1}_{t+H-1}\\
&=&{\bf u}^*[t:t+H-1],
\end{array}
\]
and so all the inputs required to determine satisfaction of $\varphi$ at time $t$ have been fixed. Moreover, if ${\bf u}^{t+H}$ is successfully computed, then by the correctness of Algorithm \ref{alg:openloop}, ${\bf u}_{old}^{t+H}$ has the property that 
${\bf f}(x_t,{\bf u}_{old}^{t+H},{\bf w}^{H}) \models \varphi_{MPC}$. Since ${\bf u}^*[t:t+H-1] = {\bf u}_{old}^{t+H}$, we see that 
${\bf f}(x_t,{\bf u}^*[t:t+h],{\bf w}^{H}) \models \varphi_{MPC}$. 

It follows that 
${\bf f}(x_0,{\bf u}^*,{\bf w}) \models \varphi_{MPC}$. 
\end{proof}

We have therefore shown how a control input can be synthesized for infinite sequences satisfying $\varphi$, by repeatedly synthesizing control for sequences of length $2H$. A similar approach applies for formulas $\F \varphi_{MPC}$ and $\varphi_{MPC}~{\mathcal U}~\psi_{MPC}$, where $\varphi_{MPC},\psi_{MPC}$ are bounded-time. 

Note that we assumed that $\mathtt{PREDICT\_W}(t)$ returns an exact prediction of the disturbance signal over the next $2H$ time steps. The correctness of our approach relies on this assumption. An interesting direction of future work is to relax this requirement, demanding only an uncertain prediction of the disturbance signal.

\begin{remark}
The control objective for \ac{MPC} is usually to steer the state to the origin or to an equilibrium state. Questions that arise include those of ensuring feasibility at each time step, closed-loop stability and near-optimal performance \cite{MorariLee99}.  There is a mature theory of stability for \ac{MPC}, where the essential ingredients are terminal costs, terminal constraint sets, and local stabilizing controller that ensure closed-loop stability \cite{MayneRRS00}.

In this work, our control objective is not closed-loop stability, but satisfaction of an \ac{STL} formula. We achieve this, as detailed above, through choice of a sufficiently large prediction horizon $H$. This can be compared with the manner in which automatic satisfaction of a terminal constraint is sometimes attained by prior choice of a sufficiently large horizon.
\end{remark}



\section{Experimental Comparison of Encodings}
We implemented the Boolean and robust encodings using the tools Breach \cite{breach} and YALMIP
\cite{YALMIP}, and now present results obtained with the following formulas:
\begin{itemize}
\item $\f_1=\G_{[0,0.1]} x^{(1)}_t>0.1$
\item $\f_2=\G_{[0,0.1]} (x^{(1)}_t>0.1) \wedge \G_{[0,0.1]} (x^{(2)}_t<-0.5)$
\item $\f_3=\G_{[0,0.5]}\F_{[0,0.1]}(x_t^{(1)}>0.1)$
\item $\f_4=\F_{[0,0.2]} ( x^{(1)}_t>0.1 \wedge (\F_{[0,0.1]}(x_t^{(2)}>0.1)$\\ 
$~~~~~~~~~\wedge \F_{[0,0.1]} (x^{(3)}_t>0.1)))$
\end{itemize}
In this study, we used the trivial system $\x = \u$, where $\x$ is a 3-dimensional signal (i.e. $x_t = x_t^{(1)}x_t^{(2)}x_t^{(3)}$), so that
no constraint is generated for the system dynamics, and the cost function $J(\x,\u) =
\sum_{k=1}^{N} \|\u_{t_k}\|_1$. Note that the output of this procedure for a formula $\f$ is a signal
of minimal norm which satisfies $\f$ when using the Boolean encoding and which satisfies $\f$ with a
specified robustness $\r^\f(\x)=0.1$ for the robust encoding. For each formula we computed the
Boolean and robust encodings for an horizon $N=30$ and sampling time $\tau=0.025s$ and report the
number of constraints generated by each encoding, the time to create the resulting MILP with YALMIP
and the time to solve it using the solver Gurobi.\footnote{http://www.gurobi.com/} All experiments
were run on a laptop with an Intel Core i7 2.3 GHz processor and 16 GB of memory.

\begin{table}[ht]
\small
\centering
\begin{tabular}{@{\extracolsep{\fill}}ccccccc}
Formula  & \multicolumn{2}{c}{\#constraints} & \multicolumn{2}{c}{YALMIP Time (s)} & \multicolumn{2}{c}{Solver time (s)} \\
& B & R & B & R & B & R  \\
\hline
$\f_1$& 154 & 488  & 1.71 & 2.04 & 0.0070  & 0.0085 \\
$\f_2$& 364 & 897  & 1.94 & 2.69 & 0.0115 & 0.0229 \\
$\f_3$& 244 & 1282 & 1.84 & 3.15 & 0.0064 & 0.1356 \\
$\f_4$& 574 & 1330 & 2.29 & 3.37 & 0.2167 & 238.6\\
\end{tabular}
\caption{Boolean (B) vs robust (R) encodings. YALMIP time represents the time taken by the tool YALMIP in
  order to generate the MILP and Solver time is the time taken by the solver Gurobi to actually solve it.}
\label{tab:encodigs}
\end{table}

A first observation is that for both encodings, most of the time is spent creating the MILP, while
solving it is done in a fraction of a second. Also, while the robust encoding generates 3 to 5 times
more constraints, the computational time to create and solve the corresponding MILPs is hardly twice
more. The exception is solving the MILP for $\f_4$, which takes significantly more time for the
robust encoding than for the Boolean encoding. The reason is hard to pinpoint without a more
thorough investigation, but we can already note that solving a MILP is NP-hard, and while solvers
use sophisticated heuristics to mitigate this complexity, instances for which these heuristics
fail are bound to appear.


\section{Case Study: Building Climate Control}

\subsection{Mathematical Model of a Building} \label{sec:hvacModel}
Next we consider the problem of controlling building indoor climate,
using the model proposed by Maasoumy et al~\cite{Maasoumy13}. In this section we present a summary of the building's thermal model. 

As shown in Fig.~\ref{fig:perspective}, the building is modeled as a resistor-capacitor circuit with $n$ nodes, $m$ of which are rooms and the remaining $n-m$ are walls. We denote the temperature of room $r_i$ by $T_{r_i}$. The wall and temperature of the wall between rooms $i$ and $j$ are denoted by $w_{i,j}$ and $T_{w_{i,j}}$, respectively. The temperature of wall $w_{i,j}$ and room $r_i$ are governed by the following equations:

\begin{align}\label{Eq:WallRoom}
  C^{w}_{i,j} \frac{dT_{w_{i,j}}}{dt}= & \sum_{k \in \mathcal{N}_{w_{i,j}}} {\frac{T_{r_{k}} -T_{w_{i,j}}}{R_{i,j_{k}}}} + r_{i,j} \alpha_{i,j} A_{w_{i,j}} Q_{rad_{i,j}}\\
 C^{r}_{i} \frac{dT_{r_i}}{dt} = & \sum_{k \in \mathcal{N}_{r_i}}
{\frac{T_k-T_{r_i}}{R_{i,k_{i}}}} + \dot{m}_{r_i} c_a (T_{s_i}-T_{r_i}) +  \notag \\
& w_i \tau_{w_i} A_{win_i} Q_{rad_i} + \dot{Q}_{int_i},
\end{align}

\noindent where $C^{w}_{i,j}$, $\alpha_{i,j}$ and $A_{w_{i,j}}$ are
heat capacity, a radiative heat absorption coefficient, and the area
of $w_{i,j}$, respectively. $R_{i,j_{k}}$ is the total thermal
resistance between the centerline of wall $(i,j)$ and the side of the
wall on which node $k$ is located. $Q_{rad_{i,j}}$ is the radiative
heat flux density on $w_{i,j}$. $\mathcal{N}_{w_{i,j}}$ is the set of
all neighboring nodes to $w_{i,j}$. $r_{i,j}$ is a wall identifier, which
equals $0$ for internal walls and $1$ for peripheral walls, where
either $i$ or $j$ is the outside node. $T_{r_i}$, $C^{r}_{i}$ and
$\dot{m}_{r_i}$ are the temperature, heat capacity and air mass flow
into room $i$, respectively. $c_a$ is the specific heat capacity of
air, and $T_{s_i}$ is the temperature of the supply air to room
$i$. $w_i$ is a window identifier, which equals $0$ if none of the
walls surrounding room $i$ have windows, and $1$ if at least one of
them does. $\tau_{w_i}$ is the transmissivity of the glass of window
$i$, $A_{win_i}$ is the total area of the windows on walls surrounding
room $i$, $Q_{rad_i}$ is the radiative heat flux density per unit area
radiated to room $i$, and $\dot{Q}_{int_i}$ is the internal heat
generation in room $i$. $\mathcal{N}_{r_i}$ is the set of
neighboring \emph{room} nodes for room $i$.  Further details on this
thermal model can be found in~\cite{Maasoumy13}. 

\begin{figure}[t!]
\centering
\includegraphics[width=1.0\columnwidth]{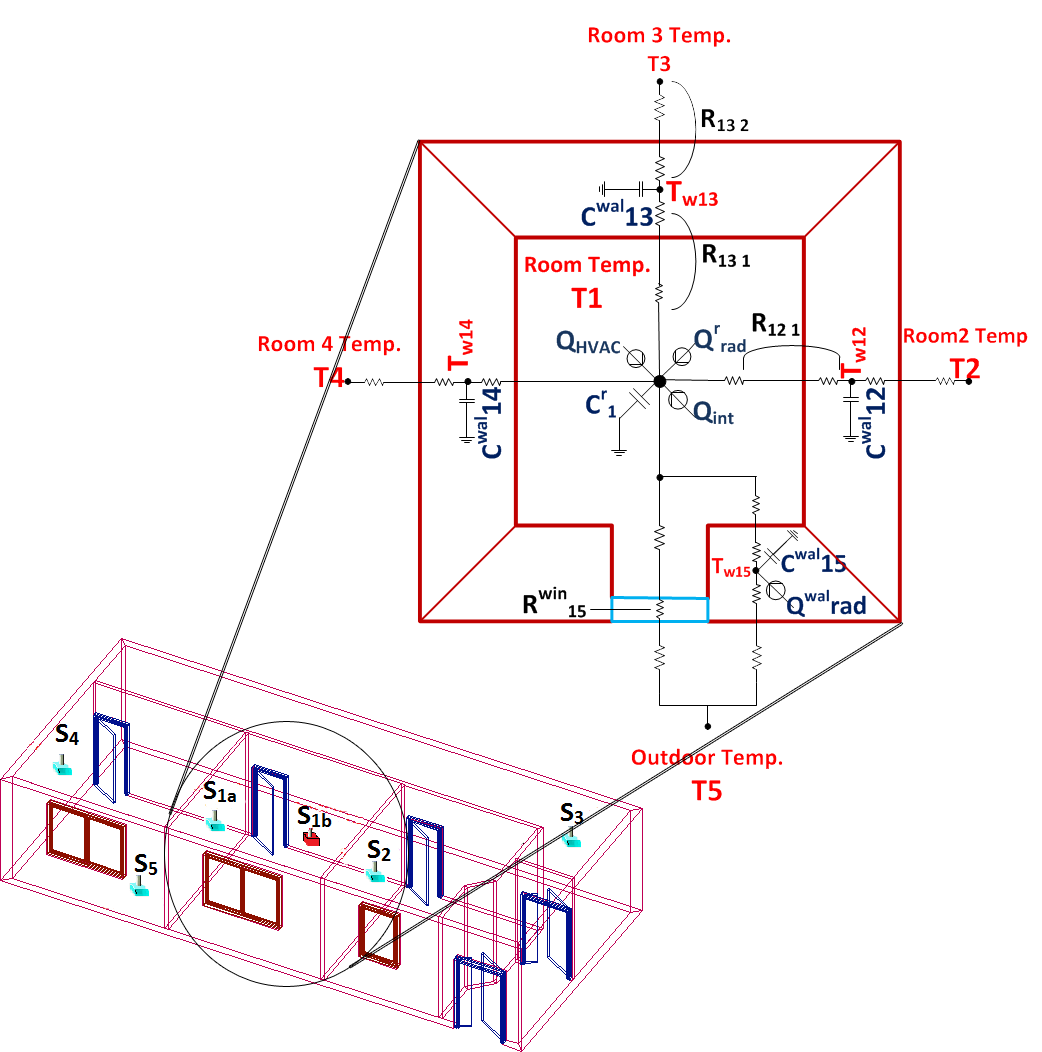} 
\caption{Resistor-capacitor representation of a typical room with a window.}
\label{fig:perspective}
\end{figure}

The heat transfer equations for each wall and room yield the system dynamics:
\[
\dot{x}_{t}= f(x_{t},u_{t}, w_{t}).
\]
\noindent Here $x_{t} \in \mathbb{R}^{n}$ is the state vector
representing the temperature of the nodes in the thermal network, and
$u_{t} \in \mathbb{R}^{lm}$ is the input vector representing the air
mass flow rate and discharge air temperature of conditioned air into
each thermal zone (with $l$ being the number of inputs to each thermal
zone, e.g. two for air mass flow and supply air temperature). 
The HVAC system of the building considered for this study operates
with a constant supply air temperature, while air mass flow is the time varying control input.
Hence, in the following simulations we consider supply air temperature constant and treat air mass flow as the control signal. Vector $w_t$
stores the estimated disturbance values, aggregating various unmodelled
dynamics such as $T_{out}$, $\dot{Q}_{int}$ and $Q_{rad}$, and can be
estimated using historical data~\cite{MaasoumyTotal}. $y_{t} \in
\mathbb{R}^{m}$ is the output vector, representing the temperature of
the thermal zones.
The building model was trained using historical data, and the result of the system
identification is shown in Fig.~\ref{fig:ident_param_dist}.

\begin{figure} [!h]
\centering
\includegraphics[width=\columnwidth]{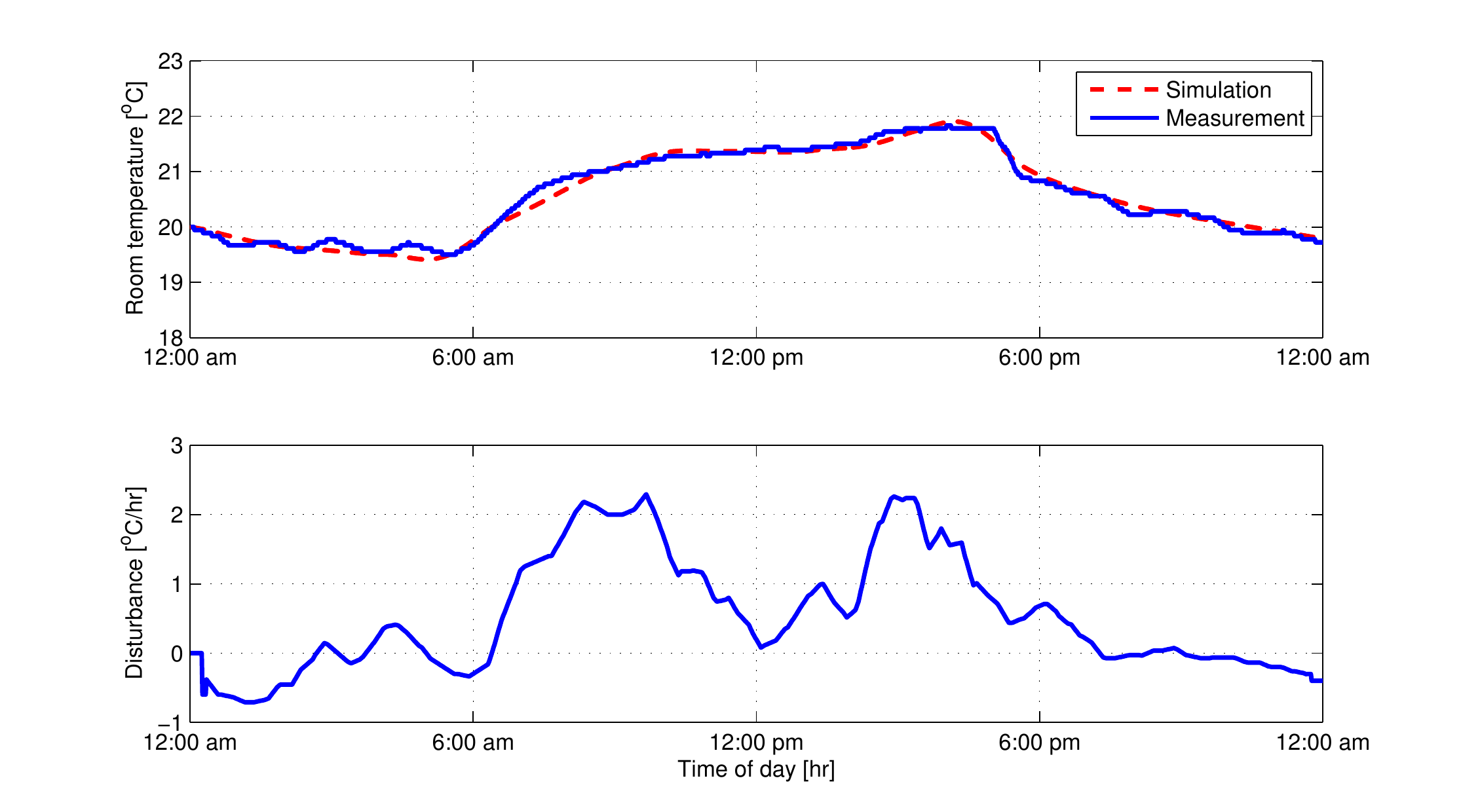}
\caption{Simulated temperature, measured temperature and unmodelled dynamics of a thermal zone in Bancroft library on UC Berkeley campus.}
\label{fig:ident_param_dist}
\end{figure} 

\subsection{MPC for Building Climate Control}
We consider the problem of controlling the above building's HVAC system using an MPC scheme. 
We adopt the MPC formulation proposed  by Maasoumy et al.~\cite{ACC:Selection}, with the objective of minimizing the total energy cost (in dollar value). $\tau$ and $H$ denote the length of each time slot and the prediction horizon (in number of time slots) of the MPC, respectively. Assume that the system dynamics are also discretized with a sampling time of $\tau$. Here we consider $\tau=0.5$ hr and $H=24$. At each time $t$, the predictive controller solves an optimal control problem to 
compute $\vec{u}_t=[u_t,\ldots,u_{t+H-1}]$, and minimizes the
cumulative norm of $u_t$: $\sum_{k=0}^{H-1} \|u_{t+k}\|$. We assume
known an occupancy function $\occ_t$ which is equal to 1 when the room
is occupied and to 0 otherwise. The purpose of the MPC is to maintain
a comfort temperature given by $T^{\comf}$ whenever the room is
occupied while minimizing the cost of heating. This problem can
be expressed as follows: 
\begin{align*}
 &\underset{\vec{u}_t}{\min}  \sum_{k=0}^{H-1} \|u_{t+k}\| ~~ \text{s.t.}\\
 &       x_{t+k+1}=f(x_{t+k},u_{t+k},w_{t+k}), \\
 &x_t \models  \f ~ \text{ with }  ~\f = \G_{[0, H]}  (( \occ_t > 0 ) \Rightarrow  (  T_t > T^{\comf}_t)  \\
 & u_{t+k}\in\mathcal{U}_{t+k},\ k=0,...,H-1
\end{align*}
\noindent 

The STL formula was encoded using the robust MILP encoding and results are presented in
Fig.~\ref{fig:hvac}. Again we observed that creating the MILP structure was longer than solving an
instance of it (4.1s versus 0.15s). However, by using a proper parametrization of the problem in
YALMIP, the creation of the MILP structure can be done once offline and reused online for each step of the
MPC, which makes the approach promising and potentially applicable even for real-time applications.

\begin{figure}
  \centering
  \includegraphics[trim = 10mm 0mm 5mm 5mm, clip, width=.5\textwidth]{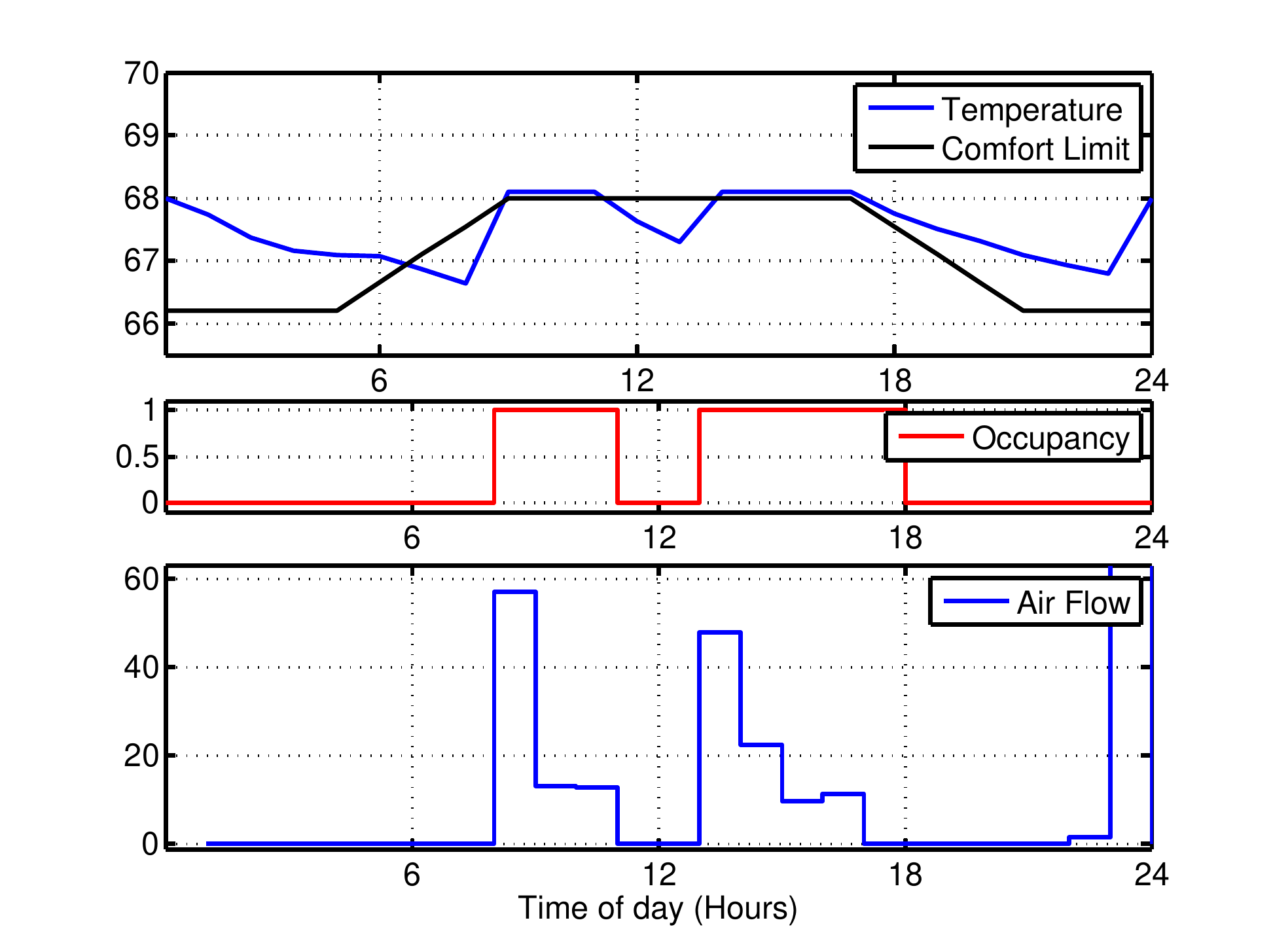}
  \caption{Room temperature control with constraints based on occupancy, expressed in STL.}
  \label{fig:hvac}
\end{figure}


\section{Case Study II: Regulation Control for Smart Grid}
\subsection{Mathematical Model}

\begin{figure}[t] 
\centering
\includegraphics[width=0.9\columnwidth]{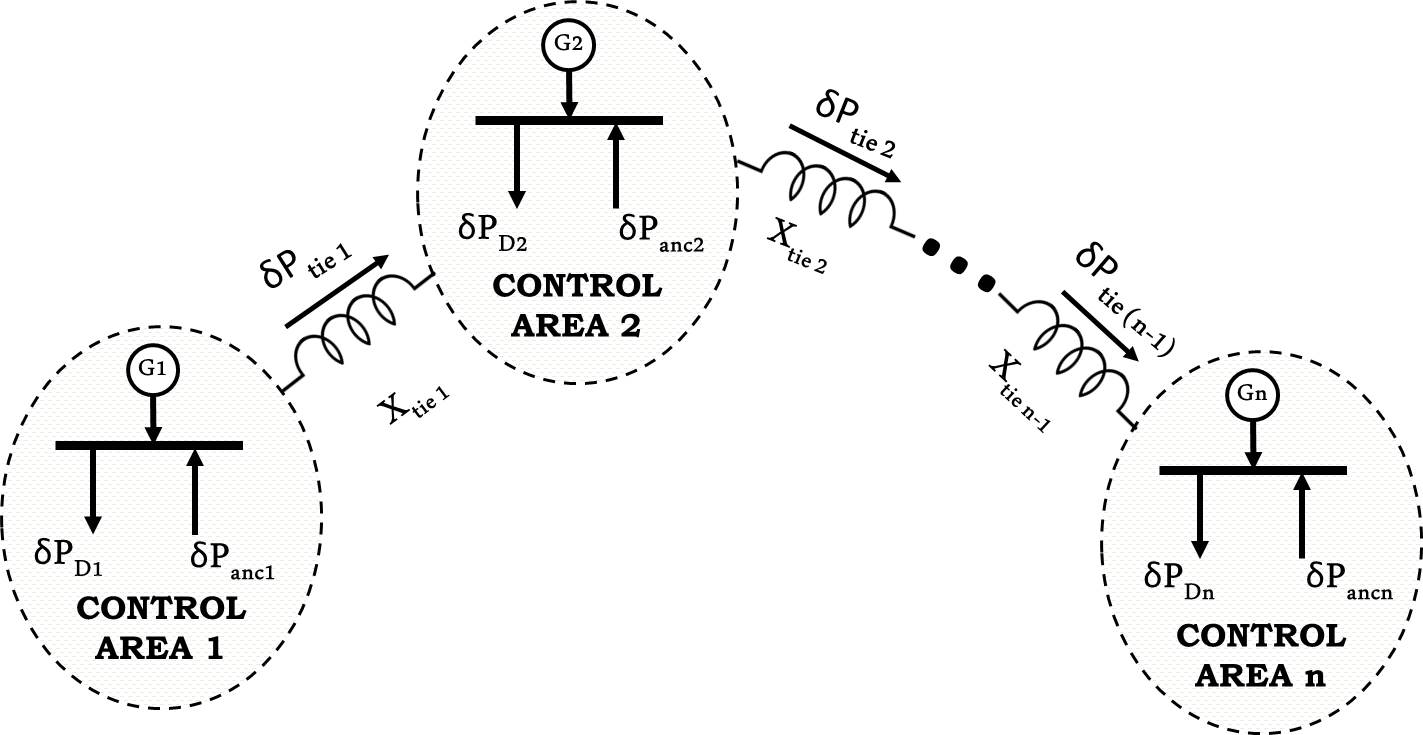}
\caption{Power system grid with $n$ areas. The dynamics in each area is depicted in Fig.\ref{fig:BlockDiag}.}
\label{fig:nAreas}
\end{figure}

\begin{figure}[t] 
\centering
\includegraphics[width=1.0\columnwidth]{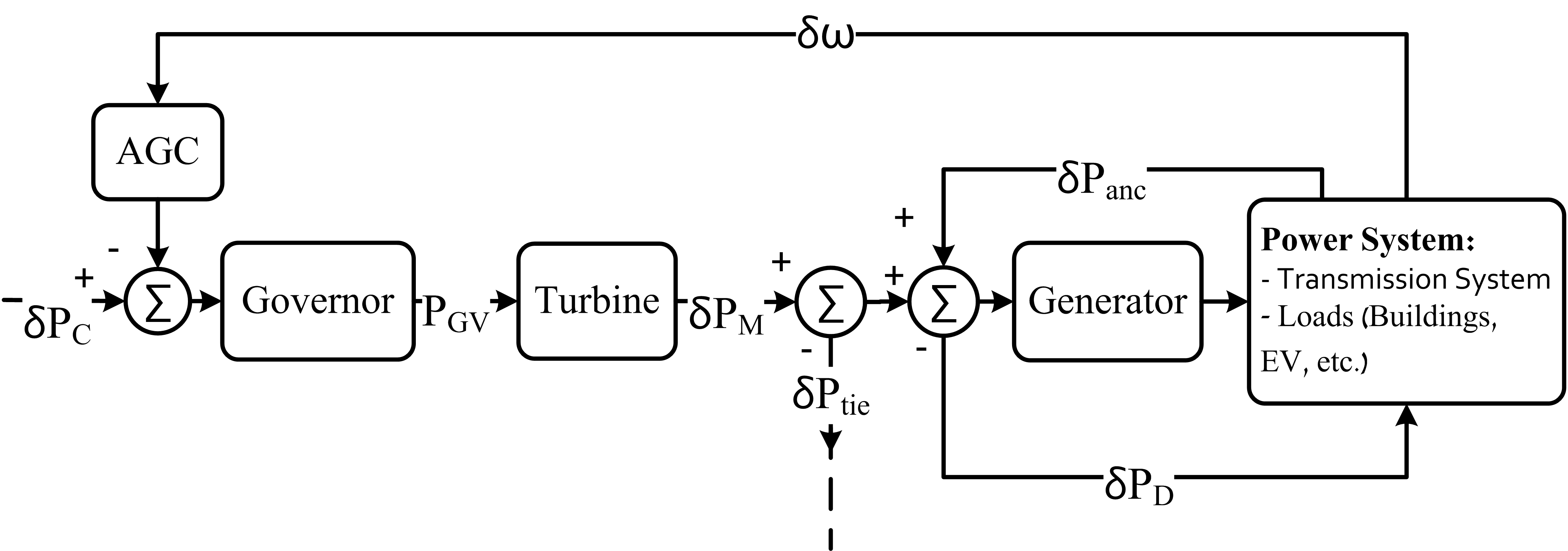}
\caption{Block diagram of power system and its relation to governor, turbine, generator, and the
  \ac{AGC} signal for each control area. More details on the power grid model can be found in~\cite{MaasoumySSP14}.}
\label{fig:BlockDiag}
\end{figure}

The second case study we consider is the $n$-areas smart grid model presented
in~\cite{MaasoumySSP14} and depicted in Fig.~\ref{fig:nAreas}. The interconnection of power system
components, including a governor, turbine and generator in each area is shown in the block diagram
in Fig.~\ref{fig:BlockDiag}. In the diagram, $\delta P_C$ is a control input which acts against an
increase or decrease in power demand to regulate the system frequency $\omega$, and $\delta P_D$
denotes fluctuations in power demand, modeled as an exogenous input (disturbance). Under steady
state, we have: $\omega = \omega_o$ and $P_M=P_G=P_M^o$, where $\omega_o$, $V_t^o$,
and $P_M^o$ are the nominal values for rated frequency, terminal voltage and mechanical power
input.

Next, we present the mathematical model for one area $i$ (note that superscripts refer to the
control area, and subscripts index states in each area).

\begin{subequations}
\label{Eq:PowSysDyn}
\scriptsize
\begin{align}
\frac{dx^{i}_1}{dt} &= \frac{(-D^{i} x^{i}_1 + \delta P^{i}_{M} - \delta P^{i}_{D} - \delta P^{i}_{\mathrm{tie}} + \delta P^{i}_{\anc})}{M} , \label{eq:x1} \\
\frac{dx^{i}_2}{dt} &= \frac{(x^{i}_3 - x^{i}_2)}{T^{i}_{7}},
~~~\frac{dx^{i}_3}{dt} = \frac{(x^{i}_4 - x^{i}_3)}{T^{i}_{6}},
~~~\frac{dx^{i}_4}{dt} = \frac{(x^{i}_5 - x^{i}_4)}{T^{i}_{5}}, \\
\frac{dx^{i}_5}{dt} &= \frac{(P^{i}_{GV} - x^{i}_5)}{T^{i}_{4}},
~~~\frac{dx^{i}_6}{dt} = \frac{(x^{i}_7 - x^{i}_6)}{T^{i}_{3}}, \\
\frac{dx^{i}_7}{dt} &= \frac{(-x^{i}_{7} + \delta P^{i}_{C} - x^{i}_1/R^{i})}{T^{i}_{1}},
~~~\frac{dx^{i}_{8}}{dt} = x^{i}_{1}
\end{align}
\end{subequations}
where $\delta P^{i}_{M}$ and $P^{i}_{GV}$ are given by
$\delta P^{i}_{M} = K^{i}_{1} x^{i}_{5} + K^{i}_{3} x^{i}_{4} +K^{i}_{5} x^{i}_{3} +K^{i}_{7}
x^{i}_{2}$,
and $P^{i}_{GV} = (1 - T_2 / T_3)x^{i}_{6} + (T_2 / T_3) x^{i}_{7}$. $D$ is the damping coefficient,
$M$ is the machine inertia constant, $R$ is the speed regulation constant, $T_i$'s are time
constants for power system components, and $K_i$'s are fractions of total mechanical power outputs
associated with different operating parts of the turbine. $\delta P^{i}_{\mathrm{tie}}$ represents
power transfer from area $i$ to other areas.  In equation~\eqref{Eq:PowSysDyn}, the first state represents
the frequency increment, $x^{i}_1 = \delta \omega_{i}$. It can be shown that $P^{i}_{\mathrm{tie}}$
can be obtained from
\begin{equation}
\delta P^{ij}_{\mathrm{tie}} = \sum^{n}_{j=1} \nu_{ij} (x^{i}_8 - x^{j}_8),
\end{equation}
where $\nu_{ij}$ is the transmission line stiffness coefficient, and the state variable $x^{i}_{8}$
is the integral of $x_1^i$.\\

The classical automatic generation control (\ac{AGC}) implements a simple PI control to regulate the grid frequency.  
In a multi-area power system, in addition to regulating frequency within each area, the auxiliary control should maintain the net interchange power with neighboring areas at scheduled values \cite{Bevrani09}. This is generally accomplished by adding a tie-line flow deviation to the frequency deviation in the auxiliary feedback control loop. A suitable linear combination of the frequency and tie-line deviations for area $i$, is known as the \ac{ACE}: this measures the difference between the scheduled and actual electrical generation within a control area while taking frequency bias into account. 
The \ac{ACE} of area $i$ is thus defined as $ACE^i = \delta P^{i}_{tie} + \beta^{i} x^{i}_{1}$, and
$\beta^{i}$ is the bias coefficient of area $i$. The standard industry practice is to set the bias
$\beta^{i}$ at the so-called \ac{AFRC}, which is defined as $\beta^{i} = D^{i} + 1/R^{i}$. The
integral of \ac{ACE} is used to construct the speed changer position feedback control signal
($\delta P^{i}_C$), i.e., $\delta P^{i}_C = -K^{i} x^{i}_{9}$, where $K^{i}$ is the feedback gain and
$\frac{dx^{i}_{9}}{dt}= ACE^{i}.$ 

The resulting state space model can be discretized and written in compact form as
\begin{equation} 
x(t_{k+1}) = A x(t_{k}) + B_1 u_{\anc}(t_{k}) + B_2 w(t_{k}). \label{eq:discrete_ss}
\end{equation}
Where $u_{\anc}=[\delta P^1_{\anc} \hdots \delta P^n_{\anc}]^T $ are the ancillary inputs, and the
exogenous inputs (i.e. disturbances or variations in demands) are denoted by $w=[\delta P_{D}^1
\hdots \delta P_{D}^n]^T$. We propose controller synthesis for the ancillary services, complementing the primary control of \ac{AGC}.

\subsection{\ac{MPC} for Ancillary Services}\label{sec:Predictive}
We require that $u_{\anc}$ be bounded and satisfies a maximum ramp constraint, i.e., $\underline{u}_{\anc} \leq u_{\anc}(t_k) \leq \overline{u}_{\anc}  \text{ with } \underline{u}_{\anc}
  > 0 \text{ and } 
| u_{\anc}(t_{k+1}) - u_{\anc}(t_k) | \leq \lambda, \text{for some } \lambda > 0$.
At each time step $k$, we thus solve the following problem:
\begin{align}\label{Eq:MPC}
& \underset{U_{\anc}(k)}{\min}& &  J(\ACE, U_{\anc})\\
& \text{s.t.}  & & x(t_{k+1}) = A x(t_{k}) + B_1 u_{\anc}(t_{k}) + B_2 w(t_{k}) \notag \\
& & & \underline{u}_{\anc} \leq u_{\anc}(t_{k+j}) \leq \overline{u}_{\anc} \notag \\
& & & \lvert u_{\anc}(t_{k+j+1}) - u_{\anc}(t_{k+j}) \rvert \leq \lambda  \notag 
\end{align}
\noindent where
$
U_{\anc}(k)= (u_{\anc}(t_k), u_{\anc}(t_{k+1}),~\ldots, u_{\anc}(t_{k+H}))
$
is the vector of inputs from $k$ to $k+H$ and $H$ is the prediction horizon. All the constraints of problem~\eqref{Eq:MPC} that depend on $j$ should hold for $j=0,1,\ldots,H-1$.\\

The cost function proposed in \cite{MaasoumySSP14} minimizes the $\ell_2$ norm of the $\ACE$ signal
in areas $i=1,\hdots,n$, by exploiting the ancillary service available in each area, while taking into
account the system dynamics and constraints. We propose to constrain the $\ACE$ signal to satisfy a
specified set of STL properties, while minimizing the ancillary service used by each area. Thus we defined
$J(ACE, U_\anc) = \| U_\anc \|_{\ell_2} = \sum_{i=1}^{2} \sum_{j=0}^{H-1} (U_\anc^i{[k+j]})^2$, and
an STL formula $\varphi$ which says that whenever $\abs{\ACE^i}$ is larger than 0.01, it should become less than 0.01 in less
than $\tau$ s. More precisely we used $\varphi= \G ( \varphi_t )$ with
\begin{equation}
  \begin{array}{l}
    \varphi_t =  \neg (\abs{\ACE^1} < .01 )) \Rightarrow ( \F_{[0, \tau]} (\abs{\ACE^1} < .01 ) \\
    ~~~\land
    ( \neg (\abs{\ACE^2} < .01 )) \Rightarrow ( \F_{[0, \tau]} (\abs{\ACE^2} < .01 )
  \end{array}\label{eq:stl}
\end{equation}
We encoded this formula and added the resulting constraints to the MPC problem as described in the previous
sections, and solved it for different values of $\tau$. Results are shown in Fig.~\ref{fig:ACE},
and demonstrate that the STL constraint is correctly enforced in the stabilization of the $\ACE$ signal.

\begin{figure}[t!] 
\centering
\includegraphics[trim= 1cm 0 0 0, width=1.0\columnwidth]{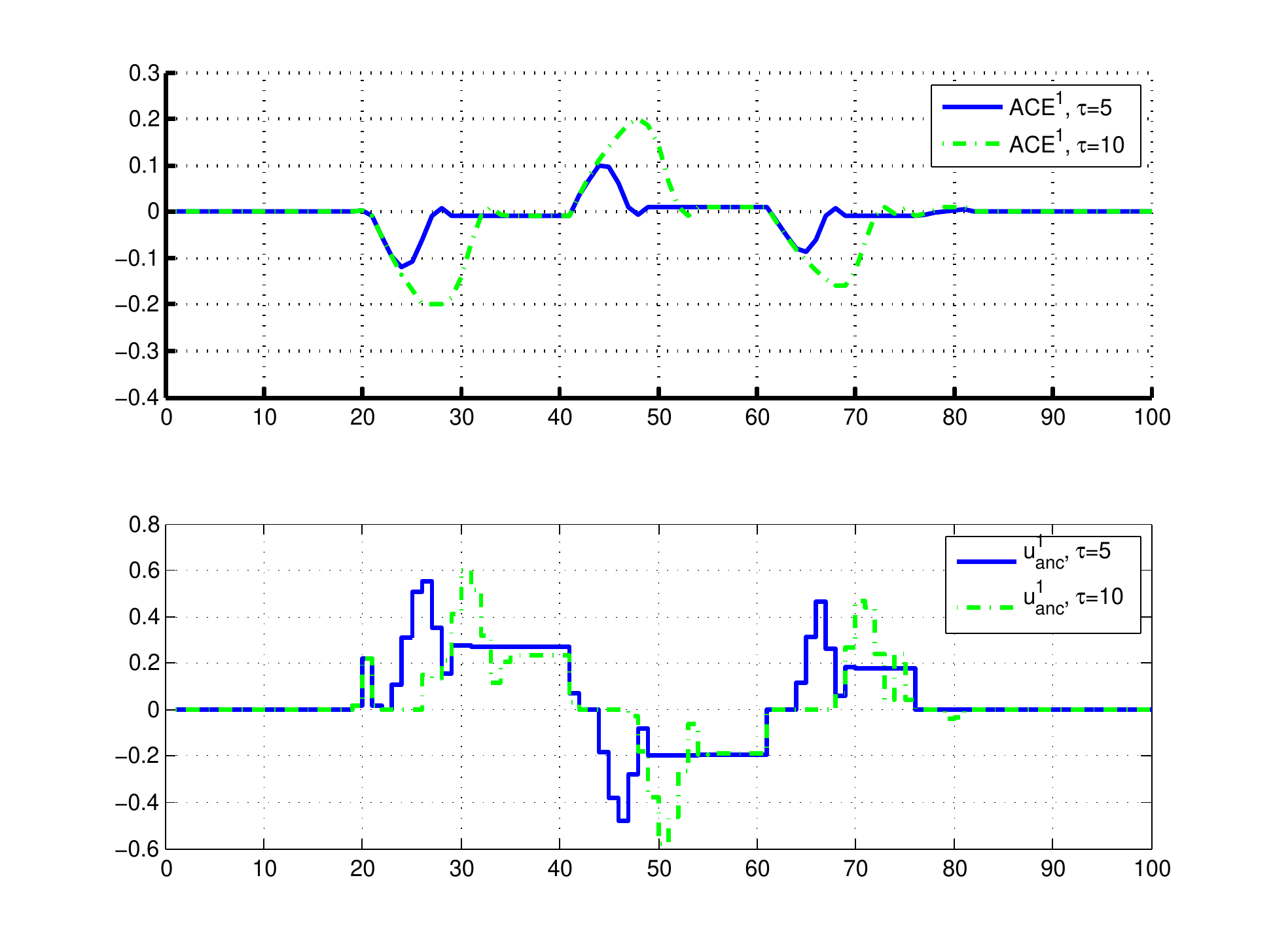}
\caption{Comparison of $\ACE^1$ (Area Control Error of Control Area 1) stabilization using $\varphi_t$ in (\ref{eq:stl}) with $\tau=5$ and
  $\tau=10$. The controller enforces the stabilization delay in both cases.}
\label{fig:ACE}
\end{figure}


\section{Related Work}
Receding horizon control for temporal logic has been considered before in the context of \ac{LTL}
~\cite{NokTM12}, where the authors propose a reactive synthesis scheme for specifications with GR(1)
goals.  The authors in \cite{GolL13} also propose an \ac{MPC} scheme for specifications in
synthetically co-safe \ac{LTL} -- our approach extends synthesis capabilities to a wider class of
temporal logic specifications.  In \cite{DingLB14}, the authors consider full \ac{LTL} but use an
automata-based approach, involving potentially expensive computations of a finite state abstraction
of the system and a Buchi automaton for the specification.  We circumvent these expensive operations
using a \ac{BMC} approach to synthesis.  In \cite{BemporadM99}, the authors present a model
predictive control scheme to stabilize mixed logical dynamical systems on desired reference
trajectories, while fulfilling propositional logic constraints and heuristic rules. A major
contribution of this work is to extend the constraint specification language for such systems to
\ac{STL} specifications, which allow expression of complex temporal properties including safety, liveness,
and response.

Our work extends the standard \ac{BMC} paradigm for finite discrete systems \cite{BiereCCZ99} to
\ac{STL}, which accommodates continuous systems. In BMC, discrete state sequences of a fixed length,
representing counterexamples or plans, are obtained as satisfying assignments to a Boolean
satisfiability (SAT) problem. The approach has been extended to hybrid systems, either by computing
a discrete abstraction of the system~\cite{GiorgettiPB05,jha-formats07} or by extending SAT solvers
to reason about linear inequalities \cite{AudemardBCS05,FranzleH05}. Similarly, \ac{MILP} encodings
inspired by \ac{BMC} have been used to generate trajectories for continuous systems with \ac{LTL}
specifications \cite{KaramanSF08,KwonA08,WolffTM14}, and for a restricted fragment of
\ac{MTL} without nested operators \cite{KaramanF08}.  While we draw much inspiration from these early efforts, 
ours is the first work to consider a \ac{BMC} approach to synthesis for full \ac{STL}.


\section{Concluding Remarks}
The main contribution of this paper is a pair of bounded model checking style encodings for signal temporal logic specifications as mixed integer linear constraints. We showed how our encodings can be used to generate control for systems that must satisfy \ac{STL} properties, and additionally to ensure maximum robustness of satisfaction. Our formulation of the \ac{STL} synthesis problem can be used as part of existing controller synthesis frameworks to compute feasible and optimal controllers for cyber-physical systems. We presented experimental results for controller synthesis on simplified models of a smart micro-grid and HVAC system, and showed how the \ac{MPC} schemes in these examples can be framed in terms of synthesis from an STL specification, with simulation results illustrating the effectiveness of our proposed synthesis.

We have demonstrated the ability to synthesize control for systems on both the demand and supply sides of a smart grid. 
We view this as progress toward a contract-based framework for specifying and designing components of the smart grid and their interactions using \ac{STL} specifications. 
Future work includes a \emph{reactive synthesis} approach to synthesizing control inputs for systems operating in uncertain environments: we have already
demonstrated preliminary results in this direction in \cite{RamanDSMS15}. We will also further explore synthesis in an \ac{MPC} framework for unbounded STL properties. 
As mentioned in Section \ref{mpc_unbounded}, this is an easy extension of our approach for certain types of properties. 
Extending this to arbitrary properties has ties to online monitoring of \ac{STL} properties \cite{DeshmukhDGJJS15}, which is another direction of further exploration. 



\bibliographystyle{IEEEtran}
\bibliography{tacrefs}

\vspace{-1cm}
\begin{IEEEbiography}[{\includegraphics[width=1in,clip,keepaspectratio]{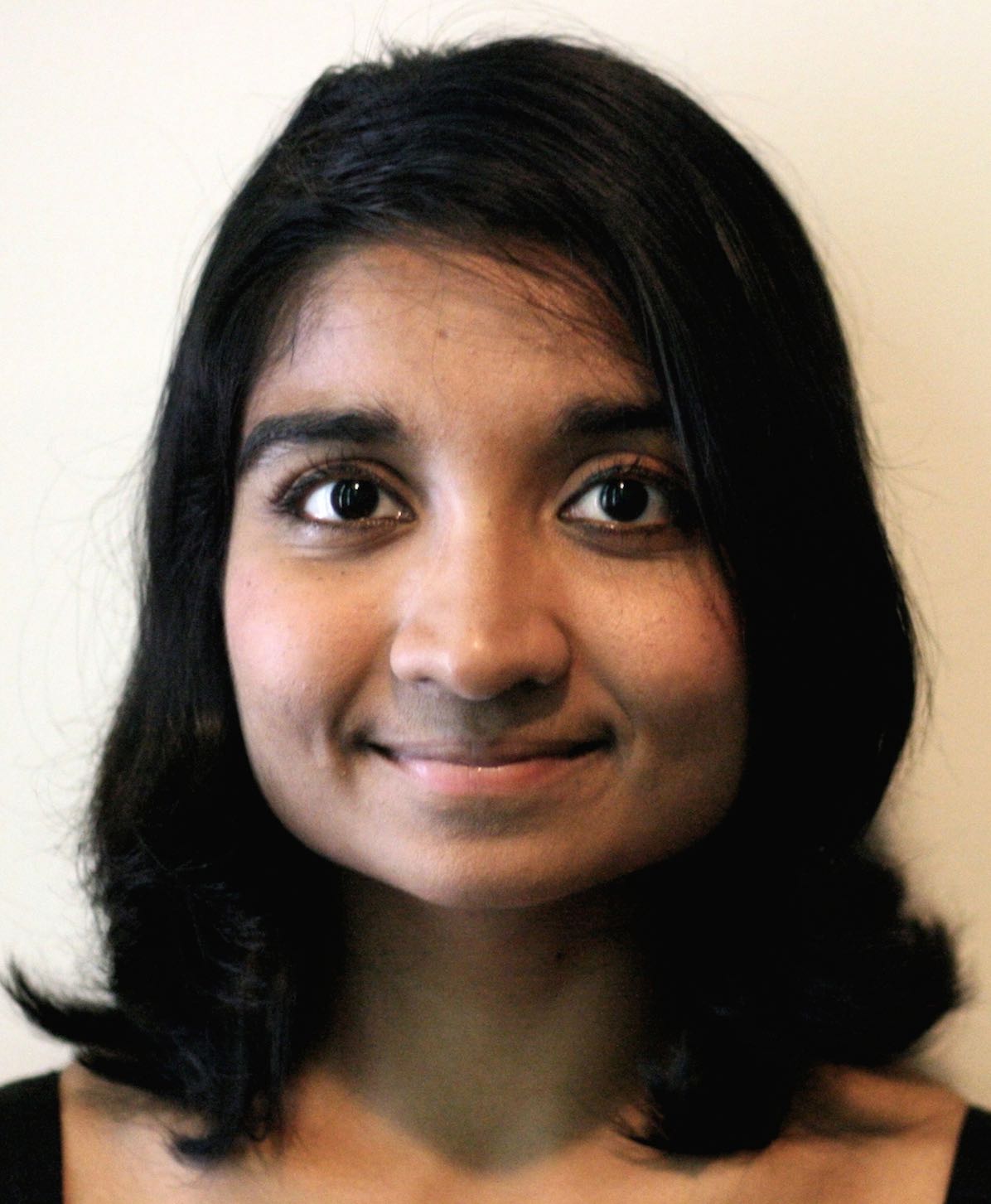}}]{Vasumathi Raman} received the B.A. degree in Computer Science and Mathematics from Wellesley College in 2007 and the M.S. and Ph.D. degrees in Computer Science from Cornell University in 2011 and 2013, respectively. She was a postdoctoral scholar in the Department of Computing \& Mathematical Sciences at the California Institute of Technology from 2013-2015, and is currently a Senior Scientist at the United Technologies Research Center in Berkeley, CA. Her research explores algorithmic methods for designing and controlling autonomous systems, guaranteeing correctness with respect to user-defined specifications.
\end{IEEEbiography}
\vspace{-0.2cm}
\begin{IEEEbiography}[{\includegraphics[width=1in,clip,keepaspectratio]{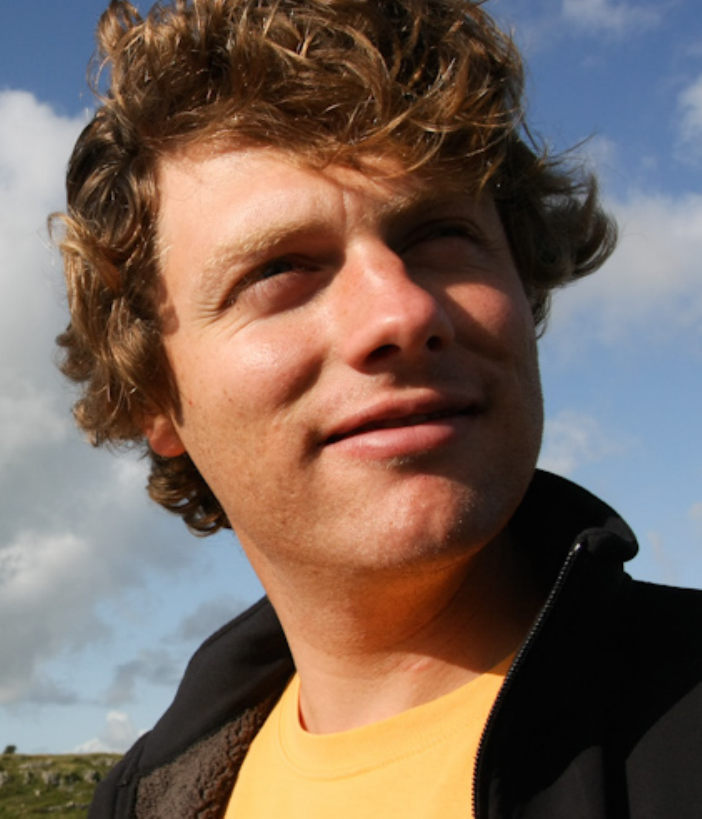}}]{Alexandre Donz\'{e}} is a research scientist at the University of California, Berkeley in the department of Electrical Engineering and Computer Science. He received his Ph.D. degree in Mathematics and Computer Science from the University of Joseph Fourier at Grenoble in 2007. He worked as a post-doctoral researcher at Carnegie Mellon University in 2008, and at Verimag in Grenoble from 2009 to 2012. His research interests are in simulation-based design and verification techniques using formal methods, Signal Temporal Logic (STL) with applications to cyber-physical systems and systems biology.
\end{IEEEbiography}
\vspace{-0.2cm}
\begin{IEEEbiography}[{\includegraphics[width=1in,clip,keepaspectratio]{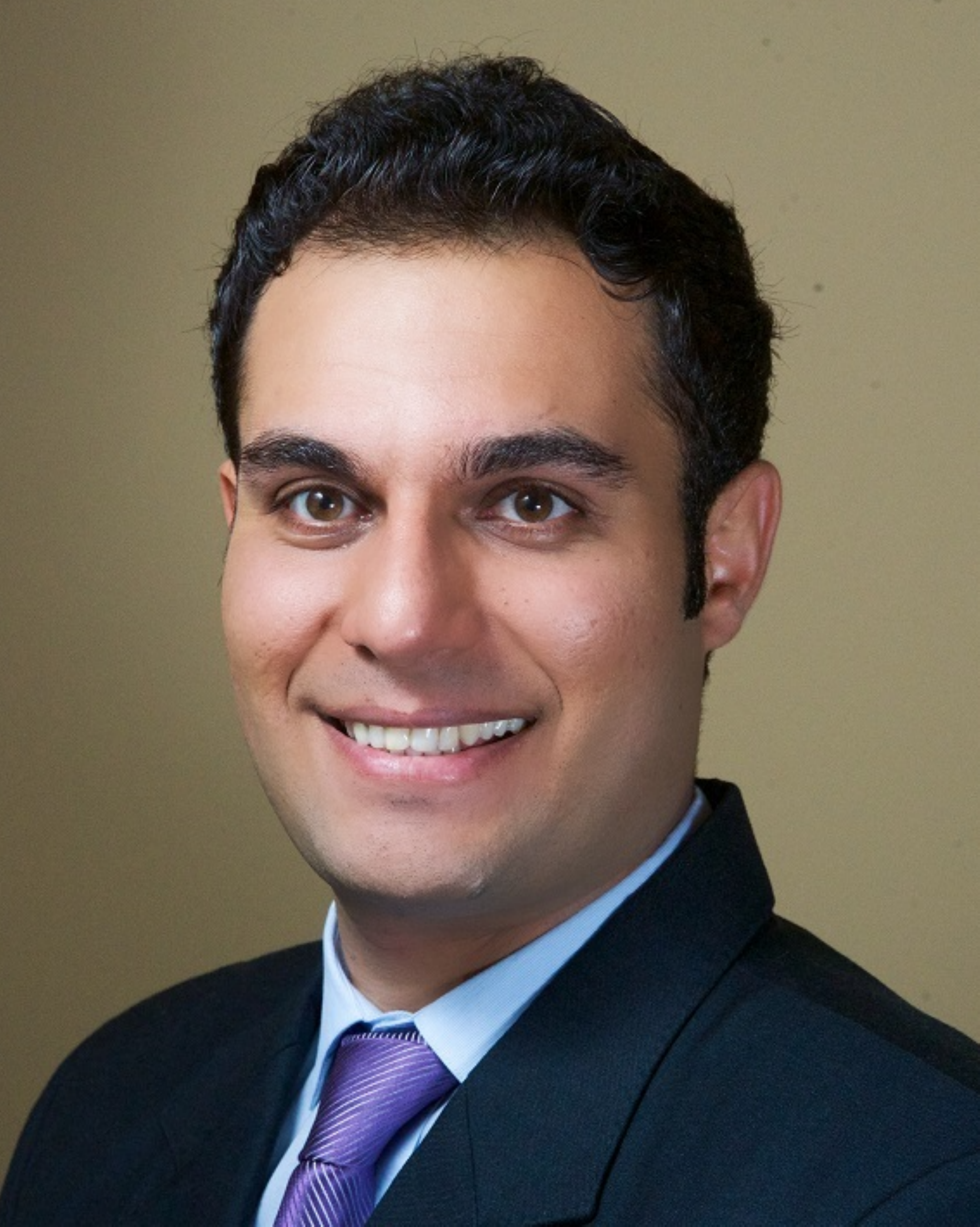}}]{Mehdi Maasoumy} received his PhD in  Electrical Engineering \& Computer Sciences in 2013 from UC Berkeley, where he was advised by Prof. Alberto Sangiovanni-Vincentelli. He received his MSc degree in 2010 from Mechanical Engineering Department at UC Berkeley with major in Controls and minor in Optimization with Designated Emphasis in Energy Systems and Technology (DEEST). He graduated form Sharif University of Technology in 2008 with a BSc degree in Mechanical Engineering.
\end{IEEEbiography}
\vspace{-0.2cm}
\begin{IEEEbiography}[{\includegraphics[width=1in,height=1.25in,clip,keepaspectratio]{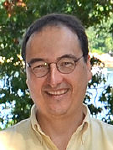}}]{Richard M. Murray} received the B.S. degree in Electrical Engineering from California Institute of Technology in 1985 and the M.S. and Ph.D. degrees in Electrical Engineering and Computer Sciences from the University of California, Berkeley, in 1988 and 1991, respectively. He is currently the Thomas E. and Doris Everhart Professor of Control \& Dynamical Systems and Bioengineering at Caltech. Murray's research is in the application of feedback and control to networked systems, with applications in biology and autonomy. Current projects include analysis and design biomolecular feedback circuits; specification, design and synthesis of networked control systems; and novel architectures for control using slow computing.
\end{IEEEbiography}
\vspace{-0.2cm}
\begin{IEEEbiography}[{\includegraphics[width=1in,height=1.25in,clip,keepaspectratio]{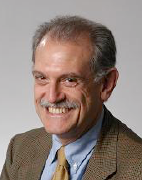}}]{Alberto Sangiovanni Vincentelli} holds the Edgar L. and Harold H. Buttner Chair of Electrical Engineering and Computer Sciences at the University of California at Berkeley. He has been on the Faculty of the Department since 1976. He obtained an electrical engineering and computer science degree (``Dottore in Ingegneria'') summa cum laude from the Politecnico di Milano, Milano, Italy in 1971. In 1980-1981, he spent a year as a Visiting Scientist at the Mathematical Sciences Department of the IBM T.J. Watson Research Center. In 1987, he was Visiting Professor at MIT. He has held a number of visiting professor positions at Italian Universities, including Politecnico di Torino, Universit\`{a} di Roma, La Sapienza, Universit\`{a} di Roma, Tor Vergata, Universit\`{a} di Pavia, Universit\`{a} di Pisa, Scuola di Sant'Anna. He was awarded the IEEE/RSE Wolfson James Clerk Maxwell Medal ``for groundbreaking contributions that have had an exceptional impact on the development of electronics and electrical engineering or related fields'' and the Kaufman Award of the Electronic Design Automation Council for ``pioneering contributions to EDA'' . He is Honorary Professor at Politecnico di Torino and has Honorary Doctorates from the University of Aalborg and KTH. He helped founding Cadence and Synopsys, the two leading companies in EDA. He is the author of over 850 papers, 18 books and 2 patents in the area of design tools and methodologies, large scale systems, embedded systems, hybrid systems and innovation.
\end{IEEEbiography}
\vspace{-0.2cm}
\begin{IEEEbiography}[{\includegraphics[width=1in,height=1.25in,clip,keepaspectratio]{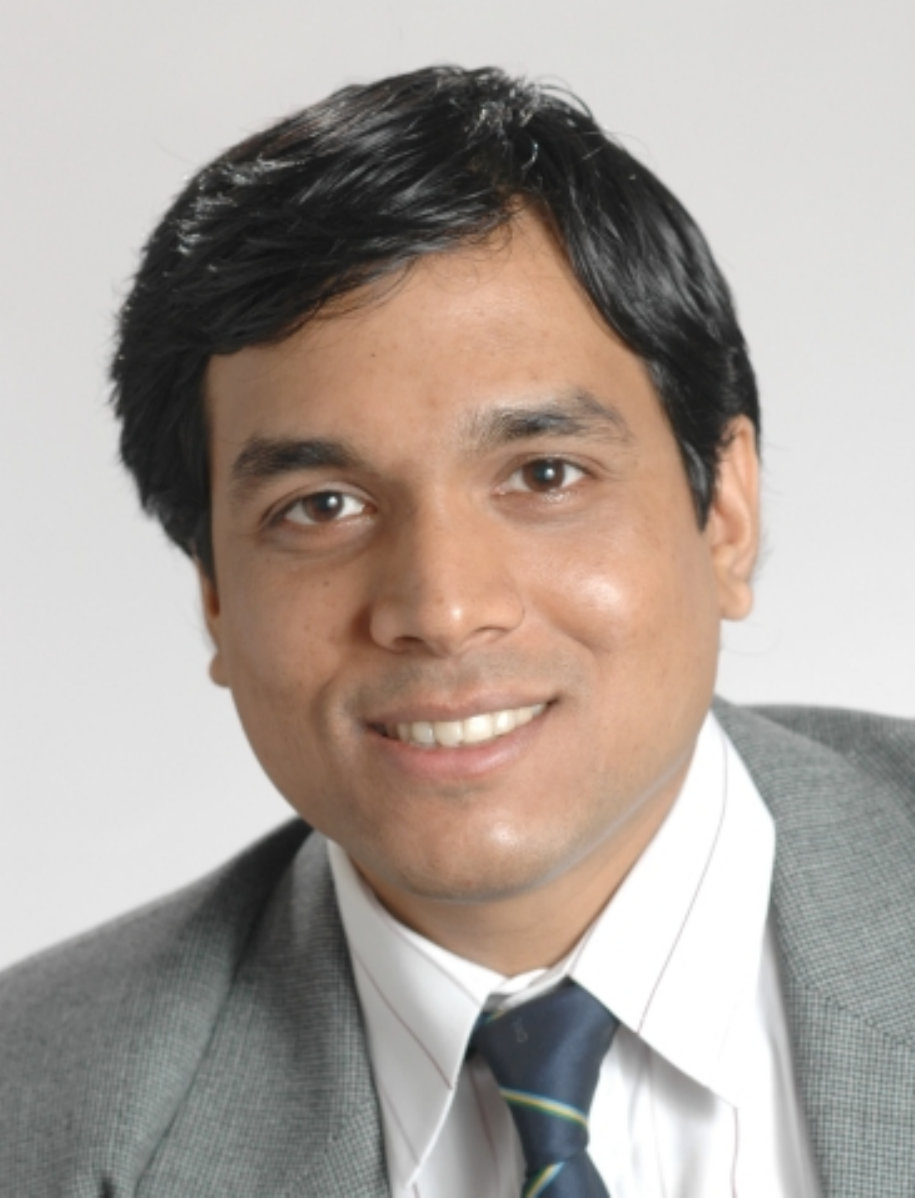}}]{Sanjit A. Seshia} received the B.Tech. degree in Computer Science and Engineering from the Indian Institute of Technology, Bombay in 1998, and the M.S. and Ph.D. degrees in Computer
Science from Carnegie Mellon University in 2000 and 2005 respectively. He is currently an Associate Professor in the Department of Electrical Engineering
and Computer Sciences at the University of California, Berkeley. His research interests are in dependable computing and computational logic, with
a current focus on applying automated formal methods to embedded and cyber-physical systems, electronic design automation, computer security, and synthetic biology. He has served as an Associate Editor of the IEEE Transactions on Computer-Aided Design of Integrated Circuits and Systems. His awards and honors include a Presidential Early Career Award for Scientists and Engineers (PECASE) from the White House, an Alfred P. Sloan Research Fellowship, the Prof. R. Narasimhan Lecture Award, and the School of Computer Science Distinguished Dissertation Award at Carnegie Mellon University.
\end{IEEEbiography}

\end{document}